\def\amsbb{\use@mathgroup \M@U \symAMSb}
\theoremstyle{plain}
\newtheorem{thm}{Theorem}[section]
\newtheorem{cor}{Corollary}[section]
\newtheorem{lemma}{Lemma}[section]
\newtheorem{defi}{Definition}[section]
\newtheorem{obs}{Remark}[section]
\newtheorem*{obsparticulares}{Remark on $\sigma^r(P)$ for $P$ of order $r$}
\numberwithin{equation}{section}
\newcommand{\DP}[2]{\frac{\partial\, #1}{\partial\, #2}}
\def\C{\mathcal {C}}
\def\Com{\mathbb{C}}
\def\A{{\mathcal{C}}^\infty(M)}
\def\AC{{\mathcal{C}}^\infty(T^*M)}
\def\CE{{\mathcal{C}}^\infty(E)}
\def\t{\mathbb{t}}
\def\M{\mathds{M}}
\def\H{\mathds{H}}
\def\T{\mathds{T}}
\def\Y{\mathds{Y}}
\def\Z{\mathds{Z}}
\def\DD{\mathds{D}}
\def\SS{\mathds{S}}
\def\U{\mathcal {U}}
\def\R{\amsbb{R}}
\def\m{\mathfrak m}
\def\Ddelta{\mathbb{\Delta}}
\def\be{\begin{equation}
}
\def\ee{\end{equation}
}
\begin{document}

\title[Time in classical and quantum mechanics]
 {Time in classical and quantum mechanics}

\author{ J.  Mu\~{n}oz-D{\'\i}az and R. J.  Alonso-Blanco}

\address{Departamento de Matem\'{a}ticas, Universidad de Salamanca, Plaza de la Merced 1-4, E-37008 Salamanca,  Spain.}
\email{clint@usal.es, ricardo@usal.es}

\begin{abstract}
In this article we study the nature of time in Mechanics. The fundamental principle, according to which a mechanical system evolves governed by a second order differential equation, implies the existence of an \emph{absolute time-duration} in the sense of Newton. There is a second notion of time for conservative systems which makes the Hamiltonian action evolve at a constant rate.  In Quantum Mechanics the absolute time loses its sense as it does the notion of trajectory.
  Then, we propose two different ways to reach the time dependent Schrödinger equation. One way consists of considering a ``time constraint'' on a free system. The other way is based on the point of view of Hertz, by considering the system as a projection of a free system. In the later manner, the ``time'' appearing in the Schrödinger equation is a linear combination of the time-duration with the ``time'' quotient of the action by the energy on each solution of the Hamilton-Jacobi equation. Both of them are based on a rule of quantization that we explain in Section \ref{tres}.
\end{abstract}
\bigskip

\maketitle

\setcounter{tocdepth}{1}


\tableofcontents

\section{Introduction}\label{intro}

Let us recall the Scholium to the Definitions in the Book I of the \emph{Principia} of Newton:
{\it
``Absolute, true, and mathematical time, of itself, and from its own nature, flows equably without relation to anything external, and by another name is called duration.''}

Classical Mechanics has been developed since Newton without the need for an analysis of the nature of time. H. Hertz,  who wrote his \emph{Die Prinzipien der Mechanik} \cite{Hertz} with a striving for rigor, avoiding ambiguities in the notion of force, dispatches the question of time in a line at the beginning of Chapter I of his book: ``The time of the first book is time of our inner intuition...'' In the excellent text of Arnold \cite{Arnold} we read: ``Space and time. Our space is three-dimensional and Euclidean, and time is one-dimensional''. In the \emph{Mechanik} of E. Mach \cite{Mach}, Ch.II,  \S 6, there is a criticism of the absolute time of Newton: ``It is an idle
metaphysical conception...'' But no contribution is offered that clarifies the meaning of ``time''.

The decisive breakthrough in the formalization of Newtonian mechanics is given in the book of Lagrange \cite{Lagrange}. From then on, and using the language used today, a mechanical system is a differentiable manifold $M$ (the configuration space), equipped with a Riemannian metric $T_2$, which collects the mechanical properties (masses, moments of inertia, ... ) of the system, and whose tangent bundle  $TM$ is the space of position-velocity states of the system. The old law ``Force = mass $\times$ acceleration'' is carried to its general form: the evolution of a mechanical system is the flow of a field $D$ tangent to $TM$, that is a second order differential equation. In the evolution of the system, the ``time'' is the very parameter of the trajectories of the field $D$.

However, there is no function in $M$ that can parameterize all the trajectories of a differential equation of second order: there can not be a  ``time'' function  on $M$. Nor can there be a time function on $ TM $ that can parameterize all the trajectories of all the second order differential equations.

``Time'' is not, nor can it be a function. It is a class of differential 1-forms on $TM$ that behaves in a functorial way. In physical terms: if a trajectory of a system is projected into another smaller system, ``ignoring degrees of freedom'', the duration does not change. That is exactly what the ``absolute'' character of Newtonian time consists of.

``Time'' is not a pre-existing reality before mechanical systems. It is a mathematical object that is into the very structure of Classical Mechanics.

This ``time'' is ``duration'',  as already noted by Newton. The issue of ``simultaneity'' is a different one.

In fact, the time-duration  remains in Relativity. Relativistic mechanical systems are a particular type within Classical Mechanics: those in which the parameter of the trajectories is the length (= proper time) \cite{RelatividadMunoz,RM}. ``Absolute time'' still exists in Relativity.

The sharp problem about the nature of time is presented in Quantum Mechanics. When the notion of ``trajectory'' stops making sense, what can be  time?

The Hamiltonian of a quantum-mechanical system can be considered as ($i$ times) the generator of a uniparametric group of unitary automorphisms and call the parameter ``time'' \cite{Yosida}. But, in that way, ``time is just treated as an \emph{external parameter} in standard quantum mechanics, rather than as a dynamical variable'' \cite{Penrose}, p. 524. The problem is in the relation that this parameter has with the time of the Classical Mechanics.

In the present article, we establish a general rule of quantization for contravariant tensor fields in any variety, which provides the quantization of the magnitudes of classical mechanical systems. In a precise sense, we will see that equations ``of Schrödinger''  are the only ones that are related to  second-order differential equations of the classical mechanical systems (and that, moreover, they must be conservative).

Once given the quantization rule, we propose two ways of arriving at the Schrödinger equation with time. Each of these modes is deduced from a classical mechanical method to obtain a conservative system from a geodesic one.

The first method consists in imposing a so-called \emph{time constraint} on a geodesic system: the space of states of the classical system is limited to those in which the function imposed as ``time'' evolves uniformly, like Newtonian time. In the corresponding Schrödinger equation, the parameter ``time'' is that imposed in the constraint. There is a direct interpretation of quantum time as classical time.

In the second method, the conservative system is the projection of a geodesic system in greater dimension by means of a \emph{Hertz constraint}. Here two ``times'' appear. One is the Newtonian absolute time, the \emph{time of the particles}, which parameterizes the virtual trajectories of the particles. Another is the \emph{wave time}, which parameterizes the progression of the wave fronts in each solution of the Hamilton-Jacobi equation. In the Schrödinger equation that is obtained in this way, the ``time'' that appears is a linear combination of particle time and wave time. This ``time'' is interpretable in classical terms only within each energy level, not in the superposition of states, in general.

The interpretation of the formula $E=h\,\nu$ is ambiguous if we do not know to what time  the frequency $\nu$ refers to. The case of the classical one particle systems in which the trajectories are always closed under a given energy level (Kepler and harmonic oscillator (Bertrand Theorem \cite{Tisserand})) is discussed in the last point of this article.

\section{Time as duration and its absolute character}\label{time}

A classical \emph{mechanical system} is a finite dimensional smooth manifold $M$, endowed with a riemannian metric $T_2$ (non degenerate of arbitrary signature) and a tangent vector field $D$ on its tangent bundle $TM$, which is a second order differential equation. $M$ is the \emph{configuration space}, $\dim M$ is the number of degrees of freedom, $T_2$ incorporates the dynamical properties (masses, inertial momenta, etc.), and $D$ is the evolution law in the space of position-velocity states, $TM$. Each solution-curve of $D$ is the evolution of the system from an initial state, being the parameter of the curve the one being canonically associated with $D$. This parameter is the \emph{time}. This is the statement of the Mechanics which we owe, essentially, to Lagrange \cite{Lagrange}.

The time-duration is not related to the metric $T_2$. Its structure strictly refers to the position-velocity space, $TM$.

To begin with, let us point out that there is no function in $ M $ that can be used to parameterize all the solution-curves of the second order differential equation $D$. There is no ``time'' function in $ M $. The object that parameterizes all the solution curves of all the second order differential equations in $ M $ is a class of 1-forms in $ TM $, which we call the class of time \cite{MecanicaMunoz}. The class of time has a functorial behavior
with respect to morphisms of manifolds: if $\varphi\colon M\to N$ is a morphism of manifolds, and $\varphi_*\colon TM\to TN$ is the corresponding morphism of tangent bundles, for each open set $\U\subset TN$ and each 1-form $\tau$ in the class of time in $ \U$, then the 1-form $(\varphi_*)^*\tau$ belongs to the class of time in $(\varphi_*)^{-1}(\U)$. This is the precise meaning of the ``absolute'' character of the time-duration, as will become clear later.
\bigskip

Let us go into the details.

Tangent vectors or tangent fields on $TM$ that are tangent to the fibres of the projection $\pi\colon TM\to M$ will be called \emph{vertical}. Those are just the ones which, as derivations of the ring $\C^\infty(TM)$, annihilate the subring $\C^\infty(M)$.

Each fibre $T_aM$ is a vector space, so that it can be identified with its own tangent space at each point: each vector $v_a\in T_aM$ determines another vector $V_{u_a}\in T_{u_a}T_aM$ at each $u_a\in T_aM$; $V_{u_a}$ is the derivative along the vector $v_a$. We will say that $V_{u_a}$ is the \emph{vertical representative} of $v_a$ (at $u_a$) and that $v_a$ is the \emph{geometrical representative} of $V_{u_a}$.

That correspondence assigns to each vertical tangent field $V$ on $TM$ a field $v$ on $TM$ with values in $TM$, that is to say, a section of $TM\times_M TM\to TM$ (projection over the first factor).

The differential 1-forms on $TM$ that (by interior product) annihilate all the vertical vectors will be called \emph{horizontal 1-forms}. For each function $f\in\C^\infty(M)$, $df$ is a horizontal 1-form and, locally, any horizontal 1-form is a linear combination of the forms $df$, with coefficients in $\C^\infty(TM)$.

Each horizontal 1-form $\alpha$ on $TM$ defines a function $\dot\alpha$ on $TM$ by the rule $\dot\alpha(v_a)=\langle\alpha,v_a\rangle$, for each $v_a\in TM$. In particular, for each function $f\in\C^\infty(M)$, the function $\dot{\overline {df}}$ will be denoted, for short, as $\dot f$. For each tangent vector $v_a\in TM$ we have $\dot f(v_a)=v_a(f)$. The map $\dot d\colon\C^\infty(M)\to\C^\infty(TM)$, $f\mapsto\dot df:=\dot f$ is, essentially, the differential.

If $\{x^1,\dots,x^n\}$ are local coordinates on an open set $\U$ of $M$, $\{x^1,\dots,x^n,\dot x^1,\dots,\dot x^n\}$ are local coordinates for
$T\U\subseteq TM$. And, for each $f\in\C^\infty(M)$ we have, in $\U$,
$$\dot df=\frac{\partial f}{\partial x^j}\,\dot x^j=\left(\dot x^j\frac{\partial }{\partial x^j}\right)f.$$
In this way, the local expression for the field $\dot d$ (field on $TM$ with values  in $TM$) is
  $$\dot d=\dot x^j\frac{\partial }{\partial x^j}.$$

  The vertical representative of $\partial/\partial x^j$ at each point $u_a\in TM$ is $(\partial/\partial\dot x^j)_{u_a}$, as it follows directly from the definitions. Therefore, the vertical field that corresponds to the tautological field $\dot d$ is $V:=\dot x^j\partial/\partial \dot x^j$, the infinitesimal generator of the homotheties in fibres.

  A tangent vector $D_{v_a}\in T_{v_a}(TM)$ is an \emph{acceleration} when, for each $f\in\C^\infty(M)$ is $D_{v_a}f=v_af$; that is to say, when $\pi_*(D_{v_a})=v_a$ ($\pi\colon TM\to M$ is the canonical projection). A field $D$, tangent to $TM$, is an \emph{second order differential equation} when the value $D_{v_a}$ at each $v_a\in TM$ is an acceleration. This means that, for each $f\in\C^\infty(M)$, we have $Df=\dot df=\dot f$. In local coordinates,
  $$D=\dot x^j\,\frac{\partial}{\partial x^j}+f^j(x,\dot x)\,\frac{\partial}{\partial\dot x^j},$$
  for certain functions $f^j\in\C^\infty(TM)$.

  Two second order differential equations $D$, $\overline D$ derive in the same way the subring $\C^\infty(M)$ of $\C^\infty(TM)$. For this reason, $D-\overline D$ is a vertical field. Second order differential equations are the sections of an affine bundle over $TM$ (the \emph{bundle of accelerations}), modeled over the vector bundle of the vertical tangent fields. When a connection on the tangent bundle $TM\to M$ is given (for example, the Levi-Civita connection of a metric given on $M$), the geodesic field $D_G$ is a second order differential equation, which provides an origin for the affine bundle of accelerations. For each second order differential equation $D$, the difference $D-D_G$ is a vertical field. In the language of Physics, vertical fields are the \emph{forces}. Thus, the datum of a connection puts in correspondence the field of accelerations $D$ with the field of forces $D-D_G$ (see the details in Section \ref{classical}).

  A 1-form $\alpha$ on $TM$ is said a \emph{contact form} when annihilates, by interior product,  all the second order differential equations. Such an $\alpha$ also annihilates the difference of any couple of second order differential equations. This is to say, $\alpha$ annihilates every vertical tangent field and, then, is a horizontal 1-form. On the other hand, for each 1-form horizontal $\alpha$ and acceleration $D_{v_a}$, we have
    $$\langle\alpha,D_{v_a}\rangle=\langle\alpha,v_a\rangle=\dot\alpha(v_a),$$
 so that $\alpha$ will be a contact form if and only if $\alpha$ is horizontal and, in addition, $\dot\alpha=\langle\alpha,\dot d\rangle=0$.

 The set of contact 1-forms is a Pfaff system on $TM$, the \emph{contact system} $\Omega$, of rank $n-1$ (if $n=\dim M$), generated out of the 0-section of $TM$, by the 1-forms
 $$\dot x^j\,dx^i-\dot x^i\, dx^j,\quad (i,j=1,\dots,n),$$
 on each coordinated open set.

  For each $v_a\in TM$, the tangent vectors on $TM$ at $v_a$ annihilated by $\Omega$ are the multiples of accelerations at $v_a$, along with the vertical vectors at $v_a$. A curve $\Gamma$ in $TM$ is a solution of the contact system if it is tangent at each point to an acceleration or a (non trivial) vertical vector.

 We will say that a horizontal 1-form $\alpha$ belongs to the \emph{class of time} on an open set $\U$ of $TM$ when $\dot\alpha=1$ on $\U$. If $\alpha$, $\beta$ belong to the class of time on $\U$, then $\alpha-\beta\in\Omega$ on $\U$: two 1-forms in the class of time are congruent modulo the contact system. For each function $f\in\C^\infty(M)$, on the open set of $TM$ where $\dot f\ne 0$, the form $df/\dot f$ belongs to the class of time. It is derived that each point $v_a$, out of the 0-section, has a neighborhood in which there is a form in the class of time: the class of time is defined all along $TM$ except the 0-section. In fact, an argument with partitions of the unity shows that there is a global form $\tau$ in the class of time on the complementary open set of the  0-section in $TM$.

 For each curve $\Gamma$ in $TM$, solution of the contact system, and which does not intersect the 0-section, we will call \emph{duration} of $\Gamma$ the integral $\int_\Gamma\tau$, where $\tau$ is an 1-form in the class of time.

 The ``absolute'' character of the duration is a consequence of the functoriality of the notion of acceleration:
 \begin{lemma}
 If $\varphi\colon M\to N$ is a morphism of smooth manifolds, and $\varphi_*\colon TM\to TN$ is the corresponding morphism between their tangent bundles, the tangent map $\varphi_{**}\colon T(TM)\to T(TN)$ sends accelerations in $TM$ to accelerations in $TN$.
 \end{lemma}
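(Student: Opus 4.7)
The plan is to verify directly the defining property of an acceleration, which here is purely a projection condition, and then to invoke the naturality (functoriality) of the tangent bundle construction. Recall from the preceding paragraphs that a tangent vector $D_{v_a}\in T_{v_a}(TM)$ is an acceleration precisely when $\pi_*(D_{v_a})=v_a$, with $\pi\colon TM\to M$ the canonical projection. So what must be shown is that, if $D_{v_a}$ is an acceleration in $TM$, then $\varphi_{**}(D_{v_a})$ is a tangent vector in $TN$ based at $\varphi_*(v_a)\in TN$ whose image under $(\pi_N)_*$ is $\varphi_*(v_a)$ itself, writing $\pi_M\colon TM\to M$ and $\pi_N\colon TN\to N$ for the two projections.

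First I would record the basic naturality square. By the very definition of the tangent map, $\varphi_*$ covers $\varphi$, that is,
\[
\pi_N\circ\varphi_* \;=\; \varphi\circ\pi_M.
\]
Applying the tangent functor to both sides, and using $(\psi\circ\phi)_*=\psi_*\circ\phi_*$, yields
\[
(\pi_N)_*\circ\varphi_{**} \;=\; \varphi_*\circ (\pi_M)_*
\]
as maps $T(TM)\to TN$.

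Next I evaluate this identity at an arbitrary acceleration $D_{v_a}$. On the right, the defining property $(\pi_M)_*(D_{v_a})=v_a$ produces $\varphi_*(v_a)$. On the left one obtains $(\pi_N)_*(\varphi_{**}(D_{v_a}))$. Since the vector $\varphi_{**}(D_{v_a})$ is based at $\varphi_*(v_a)\in TN$, what the identity says is that it projects under $\pi_N$ to its own basepoint. By the definition of acceleration applied now on $N$, this is exactly the statement that $\varphi_{**}(D_{v_a})$ is an acceleration in $TN$ (at the point $\varphi_*(v_a)$), which is the conclusion of the lemma.

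The argument is really a one-line functoriality chase, so there is no genuine obstacle; the only care required is notational, namely to keep distinct the three tangent maps in play (the tangent of $\varphi$, of $\pi_M$, and of $\varphi_*$ itself) and to check that the basepoints in $T(TN)$ and in $TN$ agree. A coordinate verification using $\dot d=\dot x^j\,\partial/\partial x^j$ together with the local expression for $\varphi_*$ yields the same conclusion but adds nothing conceptual.
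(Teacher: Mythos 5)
Your proposal is correct and follows essentially the same route as the paper: both apply the tangent functor to the commutative square $\pi_N\circ\varphi_*=\varphi\circ\pi_M$ and chase the defining identity $(\pi_M)_*(D_{v_a})=v_a$ through it to conclude $(\pi_N)_*(\varphi_{**}(D_{v_a}))=\varphi_*(v_a)$. The only difference is presentational --- the paper displays the naturality diagram explicitly while you write the identity in equational form --- so nothing substantive separates the two arguments.
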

 \begin{proof}
 The commutative diagram
 $$
 \xymatrix{ TM\ar[r]^-{\varphi_*}\ar[d]_-\pi & TN\ar[d]^-{\pi}
     & {}\ar@{}[d]^-{\displaystyle{\text{gives}}} & & T_{v_a}(TM)\ar[r]^-{\varphi_{**}}\ar[d]_-{\pi_*} & T_{\varphi_*(v_a)}(TN)\ar[d]^-{\pi_*}\\
 M\ar[r]_-\varphi & N & & & T_aM\ar[r]_-{\varphi_*} & T_{\varphi(a)}N}
 $$

 Now, if $D_{v_a}$ is an acceleration at $v_a\in TM$ (so that $\pi_*D_{v_a}=v_a$), then
 $$\pi_*(\varphi_{**}(D_{v_a}))=\varphi_*\circ \pi_*\,(D_{v_a})=\varphi_*(v_a),$$
 which shows that $\varphi_{**}(D_{v_a})$ is an acceleration at $\varphi_*(v_a)$.
  \end{proof}

\begin{cor}
$(\varphi_*)^*$ applies the contact system of $N$, $\Omega_N$, into the contact system of $M$, $\Omega_M$. $\varphi_*$ applies curves $\Gamma$ solution of $\Omega_M$ into curves $\varphi_*(\Gamma)$ solution of $\Omega_N$.
\end{cor}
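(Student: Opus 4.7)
The corollary has two parts, and both follow quickly from the preceding Lemma together with the intrinsic description of $\Omega$ (``$\alpha$ is a contact form $\iff$ $\alpha$ annihilates every acceleration'').

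For the first assertion, I would take any $\alpha\in\Omega_N$ and check that $(\varphi_*)^*\alpha$ annihilates every acceleration in $TM$. Given an acceleration $D_{v_a}\in T_{v_a}(TM)$, by definition of pullback
$$\langle (\varphi_*)^*\alpha,\,D_{v_a}\rangle=\langle \alpha,\,\varphi_{**}(D_{v_a})\rangle.$$
The Lemma guarantees that $\varphi_{**}(D_{v_a})$ is an acceleration at $\varphi_*(v_a)\in TN$, and since $\alpha\in\Omega_N$ the right-hand side vanishes. Thus $(\varphi_*)^*\alpha\in\Omega_M$.

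For the second assertion I would argue it is now purely formal. Let $\Gamma$ be a solution of $\Omega_M$; by the characterization recalled just before the Lemma, the tangent vector $\Gamma'$ at each point is either an acceleration or a vertical vector, so in any case $\langle\beta,\Gamma'\rangle=0$ for every $\beta\in\Omega_M$. By functoriality of the tangent map applied to $\varphi_*\circ\Gamma$, the tangent vector of $\varphi_*(\Gamma)$ is $\varphi_{**}(\Gamma')$; hence for every $\alpha\in\Omega_N$,
$$\langle \alpha,\,(\varphi_*\Gamma)'\rangle=\langle\alpha,\,\varphi_{**}(\Gamma')\rangle=\langle(\varphi_*)^*\alpha,\,\Gamma'\rangle=0,$$
the last equality holding because $(\varphi_*)^*\alpha\in\Omega_M$ by the first part. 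Therefore $\varphi_*(\Gamma)$ is a solution of $\Omega_N$.

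There is essentially no obstacle: the content sits entirely in the Lemma (accelerations are preserved), and the corollary is the dual statement together with its immediate consequence for integral curves. The only small point worth noting explicitly is that vertical vectors are also sent by $\varphi_{**}$ to vertical vectors (immediate from $\pi_*\circ\varphi_{**}=\varphi_*\circ\pi_*$), so the ``vertical'' alternative in the definition of solution curve causes no trouble; but even this remark is unnecessary if one proves the second assertion directly from the first, as above.
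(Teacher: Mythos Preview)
Your argument is correct and is exactly the intended one: the paper states the corollary without proof, treating it as immediate from the Lemma via the characterization ``$\alpha\in\Omega\iff\alpha$ annihilates every acceleration''. Your write-up simply makes this explicit, and the extra remark about vertical vectors being preserved by $\varphi_{**}$ is a nice observation but, as you note, unnecessary once the second part is deduced from the first.
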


\begin{cor}
For each 1-form $\tau$ in the class of time on an open set $\U$ of $TN$, the form $(\varphi_*)^*(\tau)$ is in the class of time on the open set $\varphi_*^{-1}(\U)$ of $TM$.
\end{cor}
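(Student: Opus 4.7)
The plan is to verify directly the two defining conditions of the class of time for the pulled-back form $\beta := (\varphi_*)^*\tau$ on $\varphi_*^{-1}(\U)$, namely that $\beta$ is horizontal and that $\dot\beta = 1$ there.

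For horizontality, I will take an arbitrary vertical vector $V_{u_a}\in T_{u_a}(TM)$, so $\pi_* V_{u_a}=0$, and show that $\varphi_{**}V_{u_a}$ is vertical in $TN$. Differentiating the commutative square $\pi\circ\varphi_* = \varphi\circ\pi$ (used in the preceding lemma) gives $\pi_*\circ\varphi_{**}=\varphi_*\circ\pi_*$, hence $\pi_*(\varphi_{**}V_{u_a}) = \varphi_*(\pi_* V_{u_a}) = 0$. Since $\tau$ is horizontal on $\U$, it follows that $\langle\beta, V_{u_a}\rangle = \langle\tau,\varphi_{**}V_{u_a}\rangle = 0$, so $\beta$ is horizontal on $\varphi_*^{-1}(\U)$.

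For the normalization, I will fix $u_a\in\varphi_*^{-1}(\U)$ and choose any acceleration $D_{u_a}$ at $u_a$. Using the identity $\dot\alpha(v_a)=\langle\alpha,D_{v_a}\rangle$ valid for horizontal $\alpha$ and any acceleration $D_{v_a}$ (recorded just before the contact system is introduced), I compute $\dot\beta(u_a) = \langle\beta, D_{u_a}\rangle = \langle\tau,\varphi_{**}D_{u_a}\rangle$. The preceding lemma guarantees that $\varphi_{**}D_{u_a}$ is an acceleration at $\varphi_*(u_a)\in\U$, so the right-hand side equals $\dot\tau(\varphi_*(u_a)) = 1$ by hypothesis on $\tau$. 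Combining both verifications, $\beta$ lies in the class of time on $\varphi_*^{-1}(\U)$.

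The argument is essentially a two-line application of the preceding lemma, so I anticipate no substantive obstacle. The conceptual content is simply that $\varphi_{**}$ transports vertical vectors to vertical vectors and accelerations to accelerations, and these are precisely the two ingredients that, via annihilation and normalization respectively, define the class of time.
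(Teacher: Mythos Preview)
Your proof is correct and is exactly the argument the paper has in mind: the corollary is stated without proof, as an immediate consequence of the preceding lemma, and your two verifications (horizontality via $\varphi_{**}$ preserving vertical vectors, normalization via $\varphi_{**}$ preserving accelerations) are precisely the intended unpacking.
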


\begin{cor}[``Absolute'' character of the duration]\label{absolutetime}
Let $\Gamma$ be a curve solution of $\Omega_M$, whose image $\varphi_*(\Gamma)$ does not intersect the 0-section of $TN$. The duration of $\varphi_*(\Gamma)$ is the same as the duration of $\Gamma$.
\end{cor}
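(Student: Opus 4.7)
The plan is to assemble the preceding two corollaries together with the naturality of integration under pullback; the statement essentially follows without any real additional work, so the ``obstacle'' is just to check that each intermediate object is defined and that we are not crossing the 0-section anywhere.

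First I would note that the hypothesis that $\varphi_*(\Gamma)$ does not intersect the 0-section of $TN$ automatically implies that $\Gamma$ does not intersect the 0-section of $TM$, because $\varphi_*$ carries zero vectors to zero vectors. Hence both durations, $\int_\Gamma(\cdot)$ and $\int_{\varphi_*(\Gamma)}(\cdot)$, are defined as soon as we have a form in the class of time in a neighborhood of the corresponding curve; and by the first corollary of the preceding lemma, $\varphi_*(\Gamma)$ is itself a solution of $\Omega_N$, so the left-hand duration makes sense.

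Next I would choose an open set $\U\subset TN$ containing $\varphi_*(\Gamma)$, disjoint from the 0-section, together with a $1$-form $\tau$ in the class of time on $\U$ (existence was shown a few paragraphs earlier, via partitions of unity). By the third corollary, $(\varphi_*)^*\tau$ is a form in the class of time on $\varphi_*^{-1}(\U)$, which contains $\Gamma$. Using this particular representative to compute the duration of $\Gamma$ (which is legitimate because any two representatives differ by a contact form, and contact forms integrate to zero along solutions of $\Omega_M$), I would then invoke the change-of-variables identity
\[
\int_\Gamma (\varphi_*)^*\tau \;=\; \int_{\varphi_*(\Gamma)} \tau,
\]
which is the functoriality of line integrals under pullback. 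The left-hand side is the duration of $\Gamma$ and the right-hand side is the duration of $\varphi_*(\Gamma)$, so they agree.

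The only point requiring a sentence of justification is the well-definedness of duration, namely that the integral $\int_\Gamma\tau$ does not depend on the choice of $\tau$ in the class of time: this is precisely because the difference of any two such representatives is a contact form, and contact forms annihilate both the accelerations and the vertical tangent vectors, i.e.\ every tangent direction to a solution curve of $\Omega$. With that remark in place, the chain of equalities above is the whole argument; there is no serious obstacle, merely a bookkeeping step to verify that the pulled-back representative is still in the class of time, which is exactly what the previous corollary gives us.
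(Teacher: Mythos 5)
Your argument is correct and is exactly the proof the paper intends (the corollary is stated without proof, as an immediate consequence of the two preceding corollaries): pull back a representative of the class of time by $\varphi_*$, note it is again in the class of time, and apply the change-of-variables formula for line integrals, with well-definedness of duration guaranteed because two representatives differ by a contact form, which vanishes on every solution of $\Omega_M$. No gaps.
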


\noindent\textbf{Comments.} A few considerations are in order:
\begin{enumerate}
\item There can be no ``time'' function $t$ on $M$, because it should hold identically $Dt=1$ for each second order differential equation, or, that is the same, $\dot t=1$, which is absurd. If a function $t$ is chosen as time, we are restricting the manifold  of states $TM$ to the hypersurface $\dot t=1$, by means of a ``time constraint'' (see Section \ref{timeconstraint}).
    \medskip

\item Given the configuration space $M$, a projection $\varphi\colon M\to N$ can be interpreted as a ``reduction of the number of observed degrees of freedom''. That reduction does not modifies the measure of the time, in the precise sense given in Corollary \ref{absolutetime}. In the extreme case of $\dim N=1$, $N$ is a \emph{clock} for $M$.
     \medskip

\item The time-duration is independent of the metric given on $M$.  The confusion with the time-coordinate of Minkowski or the ``proper time'' of Relativity (which is the length with respect to a given metric; see below)  should be avoided.
     \medskip

\item The time as duration is defined only for trajectories in $TM$ which (except for their parametrization) have as tangent accelerations or vertical vectors. When the position ``$a$'' and the velocity ``$v_a$'' are not coupled in that sense, it is not possible to speak about time as duration. As a result, the issue of the nature of time in Quantum Mechanics is strongly non trivial. In the present work we propose two different ways to reach the time dependent Schrödinger equation (for conservative systems) by means of the quantization of classical states. One way, consist of considering a time constraint on a free system. The other way is based on the point of view of Hertz, by considering the system as a projection of a free system. In the later manner the ``time'' appearing in the Schrödinger equation is a linear combination of the time-duration with the ``time'' quotient of the action by the energy on each solution of the Hamilton-Jacobi equation.
     \medskip

\item If the observable Universe were a classical mechanical system $(M,D)$, its story would be a trajectory $\Gamma$ of $D$. The observations would be taken with respect to stretches $\Gamma_i$ of $\Gamma$, projected by means of a process  of ``forgetting degrees of freedom'', $\pi'\colon M\to M'$, along curves $\Gamma_i'$. If $\Gamma_i'$ is outside of the 0-section of $TM'$ (this is to say, if there is no point of $\Gamma'_i$ in which ``everything stops''), the duration of $\Gamma'_i$ is well defined, and it is the same as that of the $\Gamma_i$. Another observation of the same stretch, done by means of $\pi''\colon M\to M''$, will give a curve $\Gamma_i''$ with the same duration. The temporal correlation between such pairs of curves $\Gamma'$, $\Gamma''$, makes unnecessary to observe the whole of the Universe evolution to be able of measure the time.
\end{enumerate}

\section{Classical Mechanics}\label{classical}

In this section we will review the Classical Mechanics in the approach presented in \cite{MecanicaMunoz} and \cite{RM}.
In this approach, we introduced the notion of time constraint and  the concept of Hertz constraint was recovered.
We will continue with the notation given in Section \ref{time}.\medskip

Newton Mechanics is based on a link between two objects: forces and accelerations. But, for forces to  produce accelerations it is necessary a riemannian metric (of arbitrary signature) in $M$, as we will see soon.

Let $T^*M$ be the cotangent bundle of $M$. If $\U\subset M$ is an open set coordinated by $\{x^i\}$, we define the functions $p_i$ on $T^*\U\subset T^*M$ by the rule $p_i(\alpha_a):=\alpha_a(\partial/\partial x^i)$ for each $\alpha_a\in T^*\U$. The functions $\{x^i,p_i\}$, $i=1,\dots,n$, are local coordinates on $T^*\U$.

The \emph{Liouville form} $\theta$ is defined by
$$\theta_{\alpha_a}=\pi^*\alpha_a,\quad \text{for each}\,\,\alpha_a\in T^*M$$
where $\pi^*$ denotes the pullback by the projection $T^*M\to M$.
Its exterior differential, $d\theta=\omega_2$, is the canonical \emph{symplectic form} in $T^*M$. Their expressions in local coordinates are $\theta=p_i\,dx^i$ and $\omega_2=dp_i\wedge dx^i$.

Given a non-degenerate metric $T_2$ on $M$ (a 2-covariant symmetric tensor field without kernel, of arbitrary signature), we establishe, in the usual way, an isomorphism $TM\simeq T^*M$. Thus, thanks to the metric $T_2$ on $M$, we will talk about the Liouville form $\theta$ in $TM$, which will be given by $\theta_{v_a}=v_a\lrcorner\,T_2$ (pulled-back from $M$ to $TM$) and, in the whole of $TM$, $\theta={\dot d}\lrcorner\,T_2$. The tautological structures  $\dot d$ in $TM$ and $\theta$ in $T^*M$ correspond to each other by $T_2$.

 If, in local coordinates,  $T_2=g_{ij}\,dx^i dx^j$, the isomorphism between $TM$ and $T^*M$ is expressed by
  $$p_i=g_{ij}\dot x^j\quad\text{or}\quad \dot x^j=g^{ij}p_i,$$
where $g^{ij}$ denote the entry $(i,j)$ of the inverse matrix of $(g_{ij})$. In particular,
  $$\theta=g_{ij}\dot x^i\,dx^j$$
 in $TM$.
  The function $T:=(1/2)\dot\theta=(1/2)g_{ij}\dot x^i\dot x^j$ is the \emph{kinetic energy}.

 In order to avoid obvious precisions, we will go from $TM$ to $T^*M$ and viceversa, except if there is a risk of confusion, assuming that we will use the isomorphism established by the metric.
 \medskip

 The entirety of the Mechanics rests on the following
\begin{lemma}[Fundamental Lemma of Classical Mechanics]
The metric $T_2$ establishes a univocal correspondence between second order differential equations $D$ on $M$ and horizontal 1-forms $\alpha$ in $TM$, by means of the following equation
\begin{equation}\label{Newton}
D\lrcorner\,\omega_2+dT+\alpha=0.
\end{equation}
The tangent fields $u$ on $M$ that are intermediate integrals of $D$ are precisely those holding
\begin{equation}\label{NewtonIntermedia}
u\lrcorner\, d(u\lrcorner\,T_2)+dT(u)+u^*\alpha=0, 
\end{equation}
where $u^*\alpha$ is the pull-back of $\alpha$ by means of the section $u\colon M\to TM$ and $T(u)$ is the function $T$ specialized to $u$.
\end{lemma}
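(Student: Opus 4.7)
The plan is to handle the two parts of the lemma separately: for the first I use the non-degeneracy of $\omega_2$; for the second I pull back the fundamental equation along the section $u\colon M\to TM$.

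\textbf{First correspondence.} Since $\omega_2$ is non-degenerate, the contraction map $D\mapsto D\lrcorner\omega_2$ is a $\C^\infty(TM)$-linear bijection between tangent fields and 1-forms on $TM$. So (\ref{Newton}) determines $\alpha$ uniquely from $D$ and vice versa, and the content of the claim reduces to showing that $D$ is a second order differential equation if and only if $\alpha$ is horizontal. In local coordinates $(x^i,\dot x^i)$, writing $D=A^i\partial/\partial x^i+B^i\partial/\partial\dot x^i$ and using $\theta=g_{ij}\dot x^i\,dx^j$, $\omega_2=d\theta$, and $T=\tfrac12 g_{ij}\dot x^i\dot x^j$, a direct calculation shows that the coefficient of $d\dot x^j$ in $D\lrcorner\omega_2+dT$ is $g_{ij}(\dot x^i-A^i)$. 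These coefficients vanish for all $j$ (making $\alpha$ horizontal) if and only if $A^i=\dot x^i$, i.e., if and only if $D$ is a second order differential equation.

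\textbf{Pullback along $u$.} Viewing $u\colon M\to TM$ as a section, the definition $\theta_{v_a}=v_a\lrcorner T_2$ (pulled back from $M$) immediately gives $u^*\theta=u\lrcorner T_2$, hence $u^*\omega_2=d(u\lrcorner T_2)$; also $u^*T=T(u)$ tautologically. Applying $u^*$ to (\ref{Newton}) yields
\[
u^*(D\lrcorner\omega_2)+dT(u)+u^*\alpha=0,
\]
so (\ref{NewtonIntermedia}) is equivalent to the identity $u^*(D\lrcorner\omega_2)=u\lrcorner d(u\lrcorner T_2)$.

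\textbf{Intermediate integral characterization.} Set $E:=D\circ u-u_* u$, a vector field along the section $u$. Because $D$ is a second order differential equation, $\pi_*(D\circ u)=u=\pi_*(u_* u)$, so $E$ is automatically vertical. For any tangent field $X$ on $M$,
\[
u^*(D\lrcorner\omega_2)(X)=\omega_2(D\circ u,u_* X)=(u\lrcorner d(u\lrcorner T_2))(X)+\omega_2(E,u_* X),
\]
so the identity above is equivalent to $\omega_2(E,u_* X)=0$ for every $X$. One direction is trivial: if $u$ is an intermediate integral, then $E=0$. The main obstacle is the converse, which I settle by a local observation in $(x^i,p_i)$: writing $E=E_i\,\partial/\partial p_i$ and $u_* X=X^i\partial/\partial x^i+\text{(vertical part)}$, one computes $\omega_2(E,u_* X)=E_jX^j$; since this vanishes for every $X$ only when all $E_j=0$, we conclude $E=0$, i.e., $u$ is an intermediate integral.
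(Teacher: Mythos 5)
Your proof is correct. The two halves deserve separate comment. For the characterization of intermediate integrals you follow essentially the paper's own route: pull (\ref{Newton}) back along the section $u$, identify the discrepancy between (\ref{NewtonIntermedia}) and $u^*(\ref{Newton})$ as $\omega_2(E,u_*\,\cdot\,)$ for the vertical field $E=D\circ u-u_*u$, and conclude $E=0$; the paper closes the last step by noting that $V\lrcorner\,\omega_2$ is horizontal for vertical $V$ and that horizontal forms restrict faithfully to a section, which is exactly what your coordinate identity $\omega_2(E,u_*X)=E_jX^j$ makes explicit. For the first half your route is genuinely different. The paper argues intrinsically: Cartan's formula $\langle D\lrcorner\,d\theta,V\rangle=D\langle\theta,V\rangle-V\langle\theta,D\rangle-\langle\theta,[D,V]\rangle$ together with the degree-two homogeneity of $\dot\theta$ shows that a second order $D$ yields a horizontal $\alpha$, and surjectivity then comes from a rank count comparing the module of vertical fields with that of horizontal $1$-forms. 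You instead compute the $d\dot x^j$-coefficient of $D\lrcorner\,\omega_2+dT$ and find it equals $g_{ij}(\dot x^i-A^i)$, which, combined with the global non-degeneracy bijection $D\mapsto D\lrcorner\,\omega_2$, delivers both implications (horizontality of $\alpha$ $\Leftrightarrow$ $D$ of second order) in one stroke. Your version is shorter, self-contained, and avoids the separate injectivity/surjectivity discussion, at the price of being coordinate-bound; the paper's version is invariant and, along the way, establishes that $V\lrcorner\,\omega_2$ is horizontal for every vertical $V$ (the intrinsic content behind (\ref{Verticalgeometrico})), a fact it reuses both in the second half of this very proof and later in the text.
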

\begin{proof}
\cite{MecanicaMunoz,RM}

(\ref{Newton}) Given the second order differential equation $D$, we define the form $\alpha$ by the rule (\ref{Newton}). We must check that $\alpha$ is horizontal, that is to say, that $\langle\alpha,V\rangle=0$ for all vertical field $V$.

By using the classical formula of Cartan, we have
$$\langle D\lrcorner\,\omega,V\rangle =\langle D\lrcorner\,d\theta,V\rangle=D\langle\theta,V\rangle-V\langle\theta,D\rangle-\langle\theta,[D,V]\rangle.$$
Since $\theta$ is horizontal, we have $\langle\theta,V\rangle=0$. For the same reason, $\langle \theta,D\rangle=\langle\theta,\dot d\rangle=\dot\theta=2T$ and, since $\dot\theta$ is homogeneous of second degree in the $\dot x$',
$V\dot\theta=2\langle\theta,v\rangle$, where $v$ is the geometric representative of $V$. Therefore, we have $V\langle\theta,D\rangle=2\langle\theta,v\rangle$.

For $f\in\A$ it holds
$$[D,V]f=-VDf=-V\dot f=-vf.$$
It is derived that  $\langle\theta,[D,V]\rangle=-\langle\theta,v\rangle$. Putting it all together,
$$\langle D\lrcorner\,\omega_2,V\rangle=0-2\langle\theta,v\rangle+\langle\theta,v\rangle=-\langle\theta,v\rangle.$$
In addition,
$$\langle dT,V\rangle=VT=\frac 12 V\dot\theta=\langle\theta,v\rangle.$$

Adding up, it follows $\langle\alpha,V\rangle=0$, and then $\alpha$ is proved to be horizontal.

Since $\omega_2$ has no radical, the correspondence $D\to\alpha$ is injective. On the other hand, if $V$ is any vertical field, $D+V$ is a second order differential equation and, from the above proved equalities, it results that $V\lrcorner\,\omega_2$ is horizontal. Since vertical fields and horizontal 1-forms are locally free $\C^\infty(TM)$-modules  with the same rank, it follows that, given an arbitrary horizontal form $\alpha$, there exists a $D$ that holds (\ref{Newton}).
\medskip

(\ref{NewtonIntermedia}) Thinking of $u$ as a section of $TM\to M$, the fact of being an intermediate integral of $D$ means that $D$ is tangent to $u$. The specialization of $\theta$ to the section $u$ is $u\lrcorner\,T_2$ (lifted to the section $u$) and, therefore, that of $\omega_2$ is $d(u\lrcorner\,T_2)$. Thus, the condition (\ref{NewtonIntermedia}) on $u$ means that the specialization  of the 1-form ${u_*u}\lrcorner\,\omega_2$ coincides with that of $-dT(u)-\alpha$  which, by (\ref{Newton}), is the specialization of $D\lrcorner\,\omega_2$. Consequently, the condition (\ref{NewtonIntermedia}) means that the vertical field $V=D-u_*u$ (supported on $u$) holds $V\lrcorner\,\omega_2|_u=0$ and, being $V\lrcorner\,\omega_2$ horizontal, it is equivalent to $V=0$.
\end{proof}

The local expression for the second order differential equation
$$D=\dot x^j\frac{\partial}{\partial x^j}+\ddot x^j\frac{\partial}{\partial\dot x^j},\qquad \text{where}\quad \ddot x^i:=D\dot x^i,$$
determined by the horizontal form $\alpha=\alpha_j(x,\dot x)\,dx^j$ according the above theorem is
\begin{equation}\label{Newtoncoordenadas}
g_{i j}\ddot x^j+\Gamma_{k\ell,i}\,\dot x^k\dot x^\ell+\alpha_i=0
\end{equation}
or, by putting $\alpha^j=g^{ij}\alpha_i$,
\begin{equation}\label{Newtoncoordenadas2}
\ddot x^j+\Gamma_{k\ell}^j\,\dot x^k\dot x^\ell+\alpha^j=0.
\end{equation}

\begin{obs} In the proof of (\ref{Newton}) we have seen that, for each vertical field $V$, the 1-form $V\lrcorner\,\omega_2$ is horizontal. We have, in fact, the formula
\begin{equation}\label{Verticalgeometrico}
V\lrcorner\,\omega_2=v\lrcorner\,T_2,
\end{equation}
where $v$ is the geometric representative of $V$. The proof can be easily done in local coordinates.
\end{obs}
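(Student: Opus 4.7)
The approach is a direct local-coordinates verification, exactly as the remark suggests. I fix local coordinates $\{x^i\}$ on $M$ and the induced coordinates $\{x^i,\dot x^j\}$ on $TM$. A general vertical field has the form $V = V^j\,\partial/\partial \dot x^j$, and the local description of the vertical–geometric correspondence recorded in Section \ref{time} identifies its geometric representative as $v = V^j\,\partial/\partial x^j$ (pointwise, at each $u_a\in TM$, this is a tangent vector at $a\in M$).

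The main computation is to expand $\omega_2 = d\theta$ with $\theta = g_{ij}\dot x^j\,dx^i$. Differentiating produces two kinds of wedge summands: one with coefficients $\partial g_{ij}/\partial x^k$ times $dx^k\wedge dx^i$, and one of the form $g_{ij}\,d\dot x^j\wedge dx^i$. Contracting with $V$ and using verticality — namely $V\lrcorner\,dx^i = 0$ and $V(g_{ij})=0$, since the $g_{ij}$ are pulled back from $M$ — only the second kind contributes, and the bookkeeping yields $V\lrcorner\,\omega_2 = g_{ij}V^j\,dx^i$.

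On the other side, $v\lrcorner\,T_2$ is by its defining rule $w\mapsto T_2(v,w)$ the horizontal 1-form $g_{ij}V^j\,dx^i$ (using the symmetry of $g$). Comparing the two expressions gives the claimed identity.

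There is no genuine obstacle: the only care needed is with the conventions for the geometric representative and for contracting a vector into a symmetric $2$-tensor, both of which are fixed earlier in the paper. The final identity is to be read as an equality of horizontal 1-forms on $TM$, with $v\lrcorner\,T_2$ understood as the horizontal lift of the pointwise contraction taken at $a=\pi(u_a)$.
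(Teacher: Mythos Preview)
Your proof is correct and follows exactly the route the paper indicates: the remark itself does not give a proof beyond the hint ``the proof can be easily done in local coordinates,'' and your computation is precisely that local-coordinate verification. The only point worth noting is the one you already flag at the end, namely that $v\lrcorner\,T_2$ is to be read as a horizontal $1$-form on $TM$ (since $v$ is a field on $TM$ with values in $TM$), which is consistent with the paper's conventions.
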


  \begin{defi}
  A \emph{classical mechanical system} is a manifold $M$ (the \emph{configuration space}) endowed with a riemannian metric (of arbitrary signature and nondegenerate) $T_2$ and a horizontal 1-form $\alpha$, the \emph{form of work} or \emph{form of force}. The second order differential equation $D$ corresponding with $\alpha$ by (\ref{Newton}) is the \emph{differential equation of the motion} of the system $(M,T_2,\alpha)$ and (\ref{Newton}) is the \emph{Newton equation}.
  \end{defi}

  For $\alpha=0$ we have the \emph{system free of forces} or \emph{geodesic system}, whose equation of motion $D_G$ is the \emph{geodesic field}: ${D_G}\lrcorner\,\omega_2+dT=0$. $D_G$ is the Hamiltonian field corresponding to the function $T$ by means of the symplectic form $\omega_2$ associated with the given metric.

  The geodesic field provides an origin for the affine bundle of the second order differential equations. For each second order equation $D$, $D-D_G=V$ is a vertical field, the \emph{force} of the system $(M,T_2,\alpha)$ whose equation of the motion is $D$. The force $V$ and the form of force $\alpha$ are related  by
  $$V\lrcorner\,\omega_2+\alpha=0,\quad\text{or}\quad v\lrcorner\,T_2+\alpha=0\quad\text{or}\quad v=-\textrm{grad}\,\alpha,$$
 where $v$ is the geometric representative of $V$. The geometric representative of $V=D-D_G$ will be called \emph{covariant value of $D$} with respect to the metric $T_2$, and we will denote it by $D^\nabla$. The reason of its name is the following: if $\gamma$ is a parameterized curve in $M$ that is solution of a second order differential equation $D$, and $u$ is the field tangent to the curve, then $\nabla_uu=D^\nabla$ at each point of $\gamma$ (see formula (22) in \cite{MecanicaMunoz}). The equation
 \begin{equation}\label{NewtonCovariante}
 {D^\nabla}\lrcorner\,T_2+\alpha=0\quad\text{or}\quad D^\nabla=-\textrm{grad}\,\alpha\quad\text{or}\quad \nabla_uu=-\textrm{grad}\,\alpha\,\,\,\text{(along $\gamma$)}
 \end{equation}
 is a form of the Newton equation (\ref{Newton}) closer to the old ``force = mass $\times$ acceleration''.

\subsection{Conservative systems}\label{subsectionconservatives}

When $\alpha$ is an exact differential form, the system $(M,T_2,\alpha)$ is said to be \emph{conservative}. Since $\alpha$ is horizontal, the potential function $U$, such that $\alpha=dU$, belongs to $\A$. In this case, equations (\ref{Newtoncoordenadas}) are
\begin{equation}\label{Newtoncoordenadas3}
g_{i j}\ddot x^j+\Gamma_{k\ell,i}\,\dot x^k\dot x^\ell+\frac{\partial U}{\partial x^i}=0.
\end{equation}

The sum $H:=T+U$ is called \emph{Hamiltonian} of the system. Newton equation  (\ref{Newton}) becomes in this case,
\begin{equation}\label{Newtonconservativo}
D\lrcorner\,\omega_2+dH=0.
\end{equation}

When (\ref{Newtonconservativo}) is written in coordinates $(x,p)$ of $T^*M$, we get the system of Hamilton \emph{canonical equations}. The specialization of $\omega_2$ into each hypersurface $H=\textrm{const.}$ of $T^*M$,  has as radical the field $D$ (and its multiples), as it is derived from (\ref{Newtonconservativo}). From the classical argument based in the Stokes theorem it results, then, the \emph{Maupertuis Principle}: the curves in $M$ with given end points that, lifted to $TM$, remain within the same hypersurface $H=\textrm{const.}$, give values for $\int\theta$ which are extremal just for the trajectory of the system.

The equation (\ref{NewtonIntermedia}) for the intermediate integrals of $D$ is, in this case:
\begin{equation}\label{NewtonconservativoIntermedia}
u\lrcorner\,d(u\lrcorner\,T_2)+dH(u)=0.
\end{equation}

In particular, if $u$ is a Lagrangian submanifold of $TM$, this is to say, if $d(u\lrcorner\,T_2)=d\theta|_u=0$, the equation (\ref{NewtonIntermedia}) is
$$dH(u)=0\quad\text{or}\quad H(u)=\textrm{const.}$$
This is the \emph{Hamilton-Jacobi equation}, when is translated to the function $S$ of which $u$ is the gradient:
\begin{equation}\label{Hamilton-Jacobi}
H(\textrm{grad}\,S)=\textrm{const.}\,\,\text{in $TM$, or}\quad H(dS)=\textrm{const.} \,\,\text{in $T^*M$,}
\end{equation}
or, in coordinates,
$$\frac 12\,g^{jk}\,\frac{\partial S}{\partial x^j}\frac{\partial S}{\partial x^k}+U=\textrm{const.}$$

This is  a first order partial differential equation, with one unknown that does not  appear explicitly in the equation (only its derivatives appear). The local theory of this equations was developed by Jacobi \cite{Jacobi} and clarified  and completed by Lie in the 1870' \cite{Engel-Faber} in connection with the theory of contact transformations.

\subsection{Time constraints}\label{timeconstraint}

We can   \emph{impose a time} for $M$ by choosing a horizontal 1-form $\tau$ in $TM$ and admitting as position-velocity states only the points of $TM$ where it holds $\dot \tau=1$. This process  can be called a \emph{time constraint}. In particular, for $\tau=dt$, where $t$ is a function in $M$, the constraint $\dot t=1$ selects among the trajectories of second order differential equations those where $t$ ``flows equably'', as the absolute time of Newton.

In general, the field $D$ that governs the evolution of the mechanical system $(M,T_2,\alpha)$ is not tangent to the manifold $\dot\tau=1$ of the time constraint, and such a constraint imposes the modifying $D$ by changing it by other field $\overline D$ in a manner quite analogous to ordinary constraints. $\overline D$ must hold the conditions
\begin{align}
& {\overline D}\lrcorner\,\omega_2+dT+\alpha \equiv 0\,\, (\textrm{mod}\, \tau)\label{campotiempo1} \\ 
& \overline D\dot\tau=0, \qquad \text{($\overline D$ is tangent to the manifolds $\dot \tau=\textrm{const.}$).}\label{campotiempo2}
\end{align}

The above conditions completely determine the field $\overline D$ in the open set $\|\tau\|\ne 0$. In fact, 
\begin{equation}\label{campotiempo3}
\overline D=D-\frac{D\dot\tau}{\|\tau\|^2}\,\textrm{Grad}\,\tau 
\end{equation}
($\textrm{Grad}\,\tau$ is the vertical representative of $\textrm{grad}\,\tau$, and holds ${\textrm{Grad}\,\tau}\lrcorner\,\omega_2=\tau$, according (\ref{Verticalgeometrico})).

The field $\overline D$ when is restricted to the ordinary constraint $\dot\tau=0$, equals the field that governs the evolution of the system $(M,T_2,\alpha)$ subjected to that constraint.

In \cite{MecanicaMunoz}, section 3.12, it is proved that, when $\tau$ is a 1-form on $M$ (so that, $\dot\tau=0$ is, then, a linear constraint), (\ref{campotiempo3}) is equivalent to
\begin{equation}\label{campotiempo4}
\overline D=D-\frac 1{\|\tau\|^2}\left(\langle\tau,D^\nabla\rangle+\textrm{II}_{\textrm{grad}\,\tau}(\dot d,\dot d)\right)
\,\textrm{Grad}\,\tau 
\end{equation}
where $\textrm{II}_u$ is the second fundamental form of the field $u$ with respect to the metric $T_2$.

Formulae (\ref{campotiempo3}), (\ref{campotiempo4}) show that the enforce of $\dot\tau=1$ as temporal evolution law is done by imposing an additional force on the given system.

As an example, let us see that any conservative system is projection of a free system (geodesic) to which it is imposed a time constraint:

Let $\M$ be a manifold of dimension $n+1$, with metric $\T_2$. Let $x^0$ be a function in $\M$ whose differential has no zeroes (then, $x^0$ is said to be regular. We can restrict $\M$ to the open set where this condition holds, if necessary). Let us take as local coordinates in $\M$ the function $x^0$ along with $n$ first integrals $x^\mu$ ($\mu=1,\dots,n$) of $\textrm{grad}\,x^0$. Thus, $\T_2$ take the form
\begin{equation}\label{campotiempo5}
\T_2=g_{00}\,(dx^0)^2+g_{\mu\nu}\,dx^\mu dx^\nu\,\quad (\mu,\nu=1,\dots,n) 
\end{equation}

Let us suppose, moreover, that $\T_2$ is projectable to the ring of first integrals of $\textrm{grad}\,x^0$. That is to say, when $f$, $g$ are first integrals of $\textrm{grad}\,x^0$, also so is $\T^2(df,dg)$. With the coordinates we are using, that condition means that the coefficients $g_{\mu\nu}$ do not depend on $x^0$. Finally, let us suppose that also $g_{00}$ is independent of $x^0$ (this condition can be intrinsically expressed as $\textrm{II}_{\textrm{grad}\,x^0}(\textrm{grad}\,x^0,\textrm{grad}\,x^0)=0$). With this conditions, the Christoffel symbols $\Gamma_{ij,k}$ with just one index or all the three indexes 0, vanish. The equations for the geodesic field $\DD_G$ in $T\M$ are
\begin{equation}\label{campotiempo6}
\begin{cases}
g_{00}\,\ddot x^0+2\Gamma_{0\mu,0}\,\dot x^0\dot x^\mu=0\\[0.3em]
g_{\mu\nu}\ddot x^\nu+\Gamma_{\sigma\rho,\mu}\,\dot x^\sigma\dot x^\rho+\Gamma_{00,\mu}(\dot x^0)^2=0
\end{cases} 
\end{equation}
(take (\ref{Newtoncoordenadas}) with $\alpha\equiv 0$ and metric (\ref{campotiempo5})).

Let us impose on the geodesic system in $\M$ the constraint of time $\dot x^0=1$. The geodesic field $\DD_G$ is modified according (\ref{campotiempo3}) to a field $\overline \DD$ which differs from $\DD_G$ in a multiple of $\textrm{Grad}\,x^0=g^{00}\partial/\partial\dot x^0$, for which the coefficients of the $\partial/\partial\dot x^\mu$ ($\mu=1,\dots,n$) both in $\DD_G$ and $\overline\DD$ are equal. The second group of equations (\ref{campotiempo6}) in the manifold $\dot x^0=1$ becomes:
\begin{equation}\label{campotiempo7}
g_{\mu\nu}\ddot x^\nu+\Gamma_{\sigma\rho,\mu}\,\dot x^\sigma\dot x^\rho-\frac 12\DP{g_{00}}{x^\mu}=0 
\end{equation}

If we call $M$ the manifold obtained as projection of $\M$ by the field $\textrm{grad}\,x^0$ (so, $M$ is the manifold of trajectories of $\textrm{grad}\,x^0$), endowed with the metric $T_2$, projection of $\T_2$, we see that (\ref{campotiempo7}) are the equations of evolution in the conservative system $(M,T_2,dU)$, with $U=-(1/2)g_{00}$ (see equations (\ref{Newtoncoordenadas3})). Thus, the conservative mechanical system is the projection of a geodesic system (free of forces) in a configuration space of greater dimension on which we have imposed a time constraint.

\subsection{Hertz constraints}\label{Hertz}
Coming back to the equations  of the geodesics in $\M$ (\ref{campotiempo6}), we observe that the first row is the equation $\DD_G(g_{00}\dot x^0)=0$: $p_0$ is a first integral of $\DD_G$. The second group of equations is
\begin{align*}
g_{\mu\nu}\ddot x^\nu+\Gamma_{\sigma\rho,\mu}\,\dot x^\sigma\dot x^\rho-\frac 12(\dot x^0)^2\DP{g_{00}}{x^\mu}&=0,\quad\text{or}\\[0.8em]
g_{\mu\nu}\ddot x^\nu+\Gamma_{\sigma\rho,\mu}\,\dot x^\sigma\dot x^\rho+\frac 12 \,\,\,p_0^2\,\,\,\DP{g^{00}}{x^\mu}&=0
\end{align*}

On each hypersurface $p_0=P_0$ (constant) of $T\M$, this system of equations projects to $TM$ as a conservative system $(M,T_2,U)$, with $U=(1/2) P_0^2g^{00}$.

The kinetic energy in $T\M$:
$$\T=\frac 12 g^{00}p_0^2+\frac 12 g^{\mu\nu}p_\mu p_\nu$$
restricted to the manifold $p_0=P_0$ is
\begin{equation}\label{campotiempo8}
H=\frac 12 g^{00}P_0^2+T,
\end{equation}
the Hamiltonian of the conservative system $(M,T_2,U)$.

A portion of the kinetic energy in $\M$ is ``transferred'' or ``appears as'' potential energy in $M$.

The idea of considering any mechanical system as the ``observable part'' of a greater geodesic system is the principle of the Hertz Mechanics \cite{Hertz}, of which a short but very illustrative recension is made by Sommerfeld \cite{SommerfeldMecanica}. This is why  we will call \emph{Hertz constraint} the above mentioned type. That allows us to project a system to another one of lower dimension by using first integrals of the corresponding equations of motion. Our present case was studied for R. Liouville \cite{RLiouville,Lutzen} prior to Hertz.

As an example, (\ref{campotiempo8}) shows that newtonian gravitation can be represented as a Hertz constraint in a 4-dimensional space with a metric of Minkowskian signature: $g^{00}=-\textrm{const.}/r$, $T_2$ positive definite.

The Hertz constraint that  we have considered is a non-holonomic time constraint, with the form $\tau=g_{00}\,dx^0$, in the manifold $\dot\tau=P_0$. Being $\dot\tau$ first integral of $\DD_G$, the field $\overline\DD$ in (\ref{campotiempo3}) is again $\DD_G$, now specialized to the hypersurface $\dot\tau=P_0$. The equations $\dot\tau=\textrm{const.}$ give a foliation of $T\M$ by hypersurfaces, each one of them giving in $(M,T_2)$ a potential energy, energies that differ from each other by constant factors.
\bigskip

\section{Quantization of contravariant tensors. Dequantization of differential operators}\label{tres}

We will study the way in which a symmetric linear connection $\nabla$ on the configuration space $M$ determines a canonical biunivocal correspondence (up to the concrete value of $h$) between  contravariant tensor fields on $M$ and linear differential operators acting on $\A$. The passage
$\text{\emph{tensor}}\to\text{\emph{differential operator}}$
is the rule of quantization defined by $\nabla$ and the reverse step $\text{\emph{differential operator}}\to\text{\emph{tensor}}$ is the rule of dequantization defined by $\nabla$; this second passage, once given, can be continued with another one
$\text{\emph{tensor}}\to\text{\emph{infinitesimal}}$  \emph{canonical transformation on $T^*M$}, which  already only depends on the structure of $T^*M$.

This section is related with \cite{LychaginQ}. The procedure of quantization is equivalent to the proposed in \cite{QuantizacionMunozAlonso} as it is proved in \cite{QuantizacionTensor}.

The functions that we consider now are of class $\C^\infty$ with values in $\mathbb{C}$.

The quantization rule established with the data $(M,\nabla)$ is an almost obvious generalization of the usual rule of quantization on the flat space $(\R^n,d)$, where we denote by $d$ the connection canonically associated with the vector structure of $\R^n$ (the ``parallel transport'' for $d$ is the transport by linear parallelism). Let us recall such a rule.

Let $E$ be  a real $n$-dimensional vector space. Once fixed a system of vector coordinates $(x^1,\dots,x^n)$ on $E$, each real symmetric contravariant tensor of order $r$ at the origin of $E$ is written in the form:
$$\Phi_0=a^{j_1\cdots j_r}\left(\frac\partial{\partial x^{j_1}}\right)_0\cdots\left(\frac\partial{\partial x^{j_r}}\right)_0,$$
where the $a$ are real numbers and by $\cdots$ we denote the symmetrized tensor product.

The linear structure of $E$ allows us to propagate ``by parallelism''  the tensor $\Phi_0$ to a tensor field $\Phi$ on the whole of $E$, whose expression is the same as that of $\Phi_0$, by deleting the subindex $0$. This tensor field $\Phi$ defines on $\CE$ a differential operator
$$\widehat\Phi:=(-i\hbar)^ra^{j_1\cdots j_r}\frac{\partial^r}{\partial x^{j_1}\cdots\partial x^{j_r}}.$$
The assignation $\Phi\to\widehat\Phi$ is independent of changes of \emph{vector} coordinates on $E$. When  the numbers $a^{j_1\cdots j_r}$ are substituted by functions in $\CE$, the same formula assigns to the  tensor field $\Phi$ a differential  operator $\widehat\Phi$ independently of the concrete choice of \emph{vector} coordinates.

In we wish that $\widehat\Phi$ to be self-adjoint (for the measure translation invariant of $E$) it is sufficient to replace it by $\frac 12(\widehat\Phi+\widehat\Phi^+)$.

When we work on a concrete problem in curvilinear coordinates, the quantization rule is applied by passing the tensors to vector coordinates, quantizing them according to the above rule and, then, coming back to the given curvilinear coordinates.

This recipe for quantization is \emph{intrinsically determined} by the vector structure of $E$. On each $f\in\CE$ we have
$$\widehat\Phi(f)=(-i\hbar)^r\langle\Phi,d^rf\rangle,$$
where $\langle\,,\,\rangle$ denotes tensor contraction, and $d^rf$ is the $r$-th iterated differential of $f$, that has an intrinsic sense on $E$ because of its vector structure.

The generalization to any smooth manifold $M$ endowed with a symmetric (=torsionless) linear connection $\nabla$ is immediate:
\begin{defi}[Quantization defined by $\nabla$]
 For each symmetric contravariant tensor field of order $r$, $\Phi$, on $(M,\nabla)$, the \emph{quantized of $\Phi$} is the differential operator $\widehat\Phi$ which, for each  $f\in\A$ gives
\be\label{quantizado}
\widehat\Phi(f):=(-i\hbar)^r\langle\Phi,\nabla^r_{\textrm{sym}}f\rangle
\ee
where $\langle\,,\,\rangle$ denotes tensor contraction and $\nabla^r_{\textrm{sym}}f$ is the symmetrized tensor of the $r$-th covariant iterated differential of $f$ with respect to the connection $\nabla$.

The quantized of a non-homogeneous tensor is the sum of the quantized of its homogeneous components.
\end{defi}

\begin{obs}
This definition can be generalized giving a differential operator between sections of fibre bundles for each contravariant tensor $\Phi$ on $M$, once a linear connection is fixed in the first fibre bundle. This generalization does not affect what follows, and we leave it aside.
\end{obs}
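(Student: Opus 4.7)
The plan is to mirror the scalar construction (\ref{quantizado}) with sections of an auxiliary vector bundle $E\to M$ in place of functions $f\in\A$. Given a linear connection $\nabla^E$ on $E$, I would combine it with the symmetric connection $\nabla$ on $M$ (which induces a connection on $T^*M$ and on every tensor power thereof) to obtain a canonical induced connection on $(T^*M)^{\otimes k}\otimes E$ for every $k\geq 0$. For a section $s\in\Gamma(E)$, set $\nabla^0 s:=s$ and $\nabla^{k+1}s:=\nabla(\nabla^k s)$, so that $\nabla^r s\in\Gamma((T^*M)^{\otimes r}\otimes E)$. Symmetrise the $r$ cotangent factors to form $\nabla^r_{\textrm{sym}}s\in\Gamma(S^rT^*M\otimes E)$, and contract against the symmetric contravariant tensor $\Phi$ of order $r$ in the symmetric factor:
\[
\widehat\Phi(s):=(-i\hbar)^r\langle\Phi,\nabla^r_{\textrm{sym}}s\rangle\in\Gamma(E).
\]
When $E$ is the trivial line bundle and $\nabla^E$ is the de Rham differential, this specialises to the operator of the preceding definition, so what remains is to check that the general construction yields a genuine differential operator of order at most $r$.

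I would verify this in a local frame $\{e_\alpha\}$ of $E$ together with coordinates $\{x^i\}$ on $M$. The components of $\nabla^r s$ expand as iterated partial derivatives of the component functions $s^\alpha$ plus terms of strictly lower order involving the Christoffel symbols of $\nabla$ and of $\nabla^E$. After symmetrisation and contraction with $\Phi$, the leading term is $(-i\hbar)^r\Phi^{j_1\cdots j_r}\partial_{j_1}\cdots\partial_{j_r}s$, which exhibits both the order and the principal symbol; invariance under changes of local frame is automatic from the tensorial character of every step in the construction.

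The main obstacle is the lack of symmetry of the iterated covariant derivative: in contrast to $d^rf$ in the flat scalar case, $\nabla^r s$ is not symmetric in its covariant indices. The explicit symmetrisation cures this at the price of requiring control of the antisymmetric parts, which one must show contribute only lower-order terms. This reduces to expressing the commutator $\nabla_i\nabla_j-\nabla_j\nabla_i$, acting on $\nabla^{k-2}s$, in terms of the curvature tensors $R^{\nabla}$ and $R^{\nabla^E}$; since curvature acts algebraically, its contribution to $\widehat\Phi(s)$ is an operator of order at most $r-2$, and the counting of orders goes through as claimed. Once this is established, extending the construction to operators between sections of two \emph{distinct} bundles is purely formal, requiring only a compatible pairing at the target.
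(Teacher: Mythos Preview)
Your sketch is correct and is the standard way to carry out the generalization: iterate the combined connection on $(T^*M)^{\otimes k}\otimes E$, symmetrise, contract, and use that curvature is algebraic to control orders. There is nothing to compare against, however, because the paper offers no proof of this remark at all---it merely asserts that the generalization exists and explicitly says ``we leave it aside.'' You have supplied what the paper deliberately omits.

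One small point of phrasing: the remark speaks of a differential operator \emph{between} sections of fibre bundles, which you note at the end would require ``a compatible pairing at the target.'' In fact no extra pairing is needed: your construction already produces an operator $\Gamma(E)\to\Gamma(E)$, and to land in a second bundle $F$ one simply post-composes with a bundle map $E\to F$ (or, more generally, takes $\Phi$ to be a section of $S^rTM\otimes\mathrm{Hom}(E,F)$ and contracts accordingly). This is cosmetic and does not affect the substance of your argument.
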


Let us recall that a differential operator of order $r$ on $M$ (= differential operator of order $r$ on $\A$) is an $\Com$-linear map $P\colon\A\to\A$ which holds the following condition: for each point $x\in M$, $P$ takes the ideal $\m_x^{r+1}$ into $\m_x$ ($\m_x$ is the ideal of the functions of $\A$ vanishing at $x$).

It is derived that $P$ takes the quotient $\m_x^{r}/\m_x^{r+1}$ into $\A/\m_x=\Com$. By taking into account that $\m_x^{r}/\m_x^{r+1}$ is the space of symmetric covariant tensors of order $r$ at the point $x$ (homogeneous polynomials of degree $r$, with coefficients in $\Com$, in the $d_xx^1,\dots,d_xx^n$, once taken local coordinates), we see that $P$ determines a symmetric contravariant tensor of order $r$
 called \emph{symbol of order $r$ of $P$ at $x$}, denoted by $\sigma_x^r(P)$,
\be\label{simbolo}
\sigma_x^r(P)\colon\m_x^{r}/\m_x^{r+1}=T^{*r}_xM\to\R,
\ee
that is the map canonically associated with $P$ by pass to the quotient.
\begin{obs}
Given $f\in\m_x^r$, the differential operator $P$ of order $r$ gives $(Pf)(x)$, depending only on the class $[f]_{\textrm{mod}\,\m_x^{r+1}}$. But the identification of $\m_x^r/\m_x^{r+1}$ with the space of symmetric covariant tensors of order $r$ at the point $x$ is not unique. In order to fix the tensor $\sigma_x^r(P)$ in such a way that its contraction with the symmetric covariant tensor that represents $[f]_{\textrm{mod}\,\m_x^{r+1}}$, to be $(Pf)(x)$, we take such a covariant tensor as $d_x^rf$, computed in any local system of coordinates; the covariant tensor $d_x^rf$ so calculated for $f\in\m_x^r$, does not depends on the choice of coordinates.
\end{obs}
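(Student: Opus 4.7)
The content of the Remark is the claim that when $f\in\m_x^r$, the symmetric covariant tensor
\[
d_x^r f \;:=\; \frac{\partial^r f}{\partial x^{i_1}\cdots\partial x^{i_r}}(x)\,dx^{i_1}\odot\cdots\odot dx^{i_r}
\]
depends only on $f$ and $x$, not on the chart $(x^i)$.  My plan is to verify this by an iterated chain rule.

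Given a second chart $(y^j)$ around $x$, I would expand $\partial^r f/\partial y^{j_1}\cdots\partial y^{j_r}$ via a Fa\`a di Bruno-type induction on $r$.  The result is a finite sum of monomials, each of which is a product of higher-order partials of the coordinate functions $x^a(y)$ times a partial $\partial^k f/\partial x^{a_1}\cdots\partial x^{a_k}$ for some $k\le r$.  Because $f\in\m_x^r$, every pure $x$-derivative of $f$ of order strictly less than $r$ vanishes at $x$; hence, after evaluating at $x$, every monomial with $k<r$ dies.

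What survives is precisely the single monomial with $k=r$, which by construction carries only \emph{first} derivatives of the $x^a(y)$:
\[
\frac{\partial^r f}{\partial y^{j_1}\cdots\partial y^{j_r}}(x)\;=\;\frac{\partial x^{a_1}}{\partial y^{j_1}}(x)\cdots\frac{\partial x^{a_r}}{\partial y^{j_r}}(x)\;\frac{\partial^r f}{\partial x^{a_1}\cdots\partial x^{a_r}}(x).
\]
This is exactly the transformation rule for a symmetric covariant $r$-tensor at $x$, so $d_x^r f$ is unchanged by the chart transition.

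The main obstacle is purely bookkeeping: verifying that the Fa\`a di Bruno expansion really does reduce to the tensorial term after the lower-order contributions are killed by $f\in\m_x^r$.  A conceptually cleaner route that avoids the explicit combinatorics is to use the canonical vector-space isomorphism $\m_x^r/\m_x^{r+1}\simeq\mathrm{Sym}^r T_x^*M$ defined on products by $(f_1-f_1(x))\cdots(f_r-f_r(x))\mapsto d_xf_1\odot\cdots\odot d_xf_r$; under this isomorphism the class of $f\in\m_x^r$ corresponds, up to the factor $r!$ from Taylor's theorem, to $d_x^rf$, and the coordinate-independence then becomes automatic.
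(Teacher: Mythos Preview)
Your argument is correct. The paper does not actually prove this Remark; it simply asserts the coordinate-independence of $d_x^rf$ for $f\in\m_x^r$ as a known fact and moves on. Your Fa\`a di Bruno computation supplies the missing verification, and the key point---that all $x$-partials of $f$ of order $<r$ vanish at $x$, killing every non-tensorial term in the chain-rule expansion---is exactly right.

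Your alternative ``cleaner route'' is in fact closer to the paper's own viewpoint: the text already identifies $\m_x^r/\m_x^{r+1}$ with the space of symmetric covariant $r$-tensors at $x$ (homogeneous polynomials of degree $r$ in the $d_xx^i$), and the Remark is essentially fixing the normalization of that identification so that $\langle\sigma_x^r(P),d_x^rf\rangle=(Pf)(x)$. So the conceptual argument you sketch at the end is the one implicit in the paper; your explicit chain-rule calculation is a concrete check that the paper omits.
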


When $x$ runs over $M$, we get the tensor field $\sigma^r(P)$ on $M$ called \emph{symbol of order $r$ of $P$}. If $\sigma^r(P)=0$, $P$ is of order $r-1$.

In the case $M=\R^n$, with vector coordinates $x^1,\dots,x^n$, let us denote $\partial^\alpha$ the tensor
$\partial^\alpha:=(\partial/\partial x^1)^{\alpha_1}\cdots(\partial/\partial x^n)^{\alpha_n}$. Its quantized by the rule (\ref{quantizado})  (with the vector connection of $\R^n$) is $\widehat{\partial^\alpha}:=(-i\hbar)^{|\alpha|}D^\alpha$, where $D^\alpha$ is the differential operator ${\partial^{|\alpha|}}/{(\partial x^1)^{\alpha_1}\dots(\partial x^n)^{\alpha_n}}$. It is directly seen that $\sigma^{|\alpha|}(D^\alpha)=\partial^\alpha$, so that for any tensor field of order $r=|\alpha|$ on $\R^n$ is obtained, by adding terms,
\be\label{simboloplano}
\sigma^r\left(\widehat\Phi\right)=(-i\hbar)^r\Phi
\ee

Going from $\R^n$ to the general case $(M,\nabla)$ let us observe that, when the iterated differentials of a function $f$ are calculated in local coordinates, the derivatives of order $r$ of $f$ appear in terms which does not contain Christoffel  symbols (as in the case of $\R^n$). Since the symbol of an operator of order $r$ depends only on these terms, Formula (\ref{simboloplano}) is still valid in general for the quantization rule
(\ref{quantizado}) on $(M,\nabla)$.

\begin{thm}\label{tquantizado1}
The rule of quantization (\ref{quantizado}) establishes a biunivocal correspondence between linear differential operators $P$ and symmetric contravariant tensor fields (not necessarily homogeneous) on $M$. To the operator $P$ of order $r$ corresponds the tensor $\Phi=\Phi_{r}+\Phi_{r-1}+\cdots\Phi_{0}$ (each $\Phi_j$ denotes the homogeneous component of degree $j$) such that
$$
\sigma^r(P)=(-i\hbar)^r\Phi_r
$$
and, for $k=1,\dots,r$:
$$\sigma^{r-k}(P-\widehat\Phi_{r}-\cdots-\widehat\Phi_{r-k+1})=(-i\hbar)^{r-k}\Phi_{r-k}$$
and
\be\label{quantizado3}
P=\widehat\Phi=\widehat\Phi_{r}+\widehat\Phi_{r-1}+\cdots+\widehat\Phi_{0}.
\ee
\end{thm}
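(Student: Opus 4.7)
The plan is a straightforward descending induction on the order of the operator, using formula (\ref{simboloplano}) as the engine at each step. The statement already suggests the algorithm: peel off the top-order symbol, quantize the resulting tensor, and iterate.

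First I would establish existence. Given a linear differential operator $P$ of order $r$, form its principal symbol $\sigma^r(P)$, which by the discussion preceding the theorem is a well-defined symmetric contravariant tensor of order $r$ on $M$. Define
\[
\Phi_r := (-i\hbar)^{-r}\,\sigma^r(P).
\]
By (\ref{simboloplano}) applied to $\widehat{\Phi_r}$, the operator $P - \widehat{\Phi_r}$ has vanishing $r$-th order symbol, hence is a differential operator of order at most $r-1$. Repeat this construction with $P - \widehat{\Phi_r}$ in place of $P$: set $\Phi_{r-1} := (-i\hbar)^{-(r-1)}\sigma^{r-1}(P - \widehat{\Phi_r})$, and so on. After $r+1$ steps we reach an operator of order $-1$, i.e.\ the zero operator, and the resulting tensor $\Phi := \Phi_r + \Phi_{r-1} + \cdots + \Phi_0$ satisfies $P = \widehat\Phi$ by construction, together with the symbol identities displayed in the theorem.

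For uniqueness (injectivity of the map $\Phi \mapsto \widehat\Phi$), suppose $\widehat\Phi = 0$ for some non-zero symmetric contravariant tensor $\Phi = \Phi_s + \Phi_{s-1} + \cdots + \Phi_0$ with $\Phi_s \neq 0$ and $s$ maximal. By (\ref{simboloplano}) applied degree by degree, $\widehat{\Phi_j}$ has order exactly $j$ whenever $\Phi_j \neq 0$, so $\widehat\Phi$ has order $s$ and
\[
\sigma^s(\widehat\Phi) = (-i\hbar)^s\,\Phi_s \neq 0,
\]
contradicting $\widehat\Phi = 0$. Hence the correspondence is a bijection, and the formulas for $\sigma^{r-k}(P - \widehat{\Phi_r} - \cdots - \widehat{\Phi_{r-k+1}})$ in the statement are precisely the intermediate identities produced by the recursion.

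The only real subtlety, and the step I expect to be the main obstacle, is the use of (\ref{simboloplano}) on the curved manifold $(M,\nabla)$ rather than on $\mathbb{R}^n$ with its flat connection. The point is that in the local-coordinate expansion of $\nabla^r_{\textrm{sym}} f$, all terms involving Christoffel symbols of $\nabla$ contain derivatives of $f$ of order strictly less than $r$, so they contribute only to lower-order symbols. This is already indicated in the paragraph just before the theorem, and it is what legitimates (\ref{simboloplano}) in the general setting; once this is accepted, the inductive argument above is purely formal bookkeeping.
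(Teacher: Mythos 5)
Your proposal is correct and is essentially the argument the paper intends: the theorem is stated immediately after formula (\ref{simboloplano}) is justified on $(M,\nabla)$ (via the observation that Christoffel-symbol terms in $\nabla^r_{\textrm{sym}}f$ only involve lower-order derivatives of $f$), and the displayed symbol identities in the statement encode exactly your descending recursion of peeling off the top symbol, quantizing, and subtracting. Your explicit uniqueness argument via the top nonzero homogeneous component is a correct formalization of what the paper leaves implicit.
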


\begin{defi}[Dequantization]
The contravariant tensor $\Phi$ in (\ref{quantizado3}) is the \emph{dequantized of the differential operator $P$ by the connection $\nabla$}.
\end{defi}

Symmetric contravariant tensor fields (homogeneous or not) on $M$ canonically correspond with functions $f\in \AC$ polynomials along the fibres.

\begin{defi}
The function $F\in\AC$ corresponding to the tensor $\Phi$ dequantized of the differential operator $P$ will be called \emph{Hamiltonian of $P$} with respect to the connection $\nabla$.
\end{defi}

The symplectic structure $\omega_2$ of $T^*M$ assign to each $F\in\AC$  a Hamiltonian vector field $D_F$ by the rule $D_F\lrcorner\,\omega_2+dF=0$. These Hamiltonian fields are the \emph{infinitesimal canonical transformations} of Lie \cite{Lie1,Lie2}; they are the infinitesimal generators of the (local) 1-parametric groups of automorphisms of $T^*M$ which preserve its symplectic structure.

\begin{defi}[Hamiltonian field associated with a differential operator] We will call \emph{infinitesimal canonical transformation associated with the differential operator $P$ or Hamiltonian field associated with $P$} to the tangent field $D_P$ on $T^*M$ such that
$${D_P}\lrcorner\,\omega_2+dF=0,$$
where $F$ is the Hamiltonian of $P$.
\end{defi}

The path $P\to F\to D_P$ is univocal. The reverse path $D_P\to F$ determines $F$ up to a additive constant; then, $F\to P$ is univocal. Thus, up to an additive constant for $P$, the correspondence $P\leftrightarrow D_P$ is biunivocal.
\begin{thm}\label{tquantizado2}
The symmetric linear connection $\nabla$ on $M$ canonically establishes a biunivocal correspondence between linear differential operators $P$ on $\A$ (up to additive constants) and infinitesimal canonical transformations of the simplectic manifold $T^*M$ corresponding to functions polynomial along fibres (Hamiltonians).
\end{thm}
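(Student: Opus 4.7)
The plan is to obtain the stated bijection by composing three correspondences, two of which have already been established in the paper and one which is standard symplectic geometry. No new computation is really needed; the work is in stringing them together carefully and tracking what ambiguity appears at each stage.

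First, I would invoke Theorem \ref{tquantizado1}: via the rule (\ref{quantizado}), every linear differential operator $P$ on $\A$ corresponds bijectively, through the connection $\nabla$, to a (not necessarily homogeneous) symmetric contravariant tensor field $\Phi$ on $M$, namely its dequantized. This step is already done and introduces no ambiguity. Second, I would use the canonical identification between symmetric contravariant tensor fields on $M$ and functions $F\in\AC$ that are polynomial along the fibres of $T^*M\to M$: to a homogeneous symmetric tensor $\Phi_j$ of degree $j$ one associates the function $F_j(\alpha_a):=\langle\Phi_j,\alpha_a\otimes\cdots\otimes\alpha_a\rangle$, and one sums over degrees; this is a bijection of $\C^\infty(M)$-modules. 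Third, I would apply the symplectic correspondence on $(T^*M,\omega_2)$: the equation $D_F\lrcorner\,\omega_2+dF=0$ determines $D_F$ from $F$ univocally (since $\omega_2$ is non-degenerate) and, conversely, determines $F$ from $D_F$ up to an additive constant (since only $dF$ enters).

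Composing these three maps gives $P\mapsto\Phi\mapsto F\mapsto D_P$ in one direction, and $D_P\mapsto F\mapsto\Phi\mapsto P$ in the other. The only place ambiguity enters is in the third stage: $D_P$ recovers $F$ only modulo a constant, and that constant propagates back, through the bijections of stages two and one, to an additive constant for $P$ (a constant function on $M$ is the quantization of a degree-$0$ tensor, and is in turn the multiplication-by-constant operator). Hence the correspondence $P\leftrightarrow D_P$ is bijective precisely after quotienting the differential operators by additive scalars.

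The main (and essentially only) obstacle is conceptual rather than technical: one must check that the three identifications are compatible in the sense that the polynomial character of $F$ along fibres matches the fact that $\Phi$ is a genuine tensor field (not merely a formal series), and that the correspondence is $\C^\infty(M)$-linear in the base variables so the whole construction glues across coordinate charts. Both are immediate from the way $\nabla^r_{\textrm{sym}}f$ and the fibrewise polynomial structure were defined, so the proof reduces to citing Theorem \ref{tquantizado1} and the preceding definitions, and recording the constant-of-integration remark made just before the theorem statement.
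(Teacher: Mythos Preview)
Your proposal is correct and follows exactly the same approach as the paper: the paper's argument is the short paragraph immediately preceding the theorem statement, where the chain $P\to F\to D_P$ (univocal) and its reverse $D_P\to F$ (up to additive constant), then $F\to P$ (univocal), are recorded, resting on Theorem~\ref{tquantizado1} and the definitions just given. You have simply spelled out the three stages more explicitly, but the content and structure are identical.
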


 Let us assume that $M$ is endowed with a Riemannian metric $T_2$ (of arbitrary signature = pseudoriemannian metric) and $\nabla$ the associated Levi-Civita connection. Under these conditions, it makes sense to say whether or not a tangent field $D$ on $T^*M$ is a second-order differential equation; that is, the tangent field that governs a mechanical system with the configuration space $(M,T_2)$. We have,

 \begin{thm}\label{unicosbuenos}
 The necessary and sufficient condition for a linear differential  operator $P$ on $(M,T_2)$ to have as associated infinitesimal canonical transformation $D_P$ a second order differential equation is that $P$ is of the form
 $$P=-\frac{\hbar^2}2\Delta+U$$
 where $\Delta$ is the Laplacian operator of the metric and $U\in\A$.
 \end{thm}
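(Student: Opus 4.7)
The plan is to translate the second-order condition on $D_P$ into an algebraic condition on the Hamiltonian $F\in\AC$ of $P$, read off the allowed form of $F$, and then apply Theorem \ref{tquantizado1} in reverse to recover $P$.

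First, I would work in local coordinates $(x^i,p_i)$ on $T^*M$, identified with $TM$ by the metric via $\dot x^j=g^{ij}p_i$. A Hamiltonian field has the familiar form
\begin{equation*}
D_F=\DP{F}{p_i}\DP{}{x^i}-\DP{F}{x^i}\DP{}{p_i},
\end{equation*}
so $D_F f=(\partial F/\partial p_i)(\partial f/\partial x^i)$ for $f\in\A$. The SODE requirement $D_F f=\dot f=g^{ij}p_j\,\partial f/\partial x^i$ for every $f\in\A$ forces $\partial F/\partial p_i=g^{ij}p_j$ for each $i$. Since this is the same as $\partial T/\partial p_i$ with $T=\tfrac12 g^{ij}p_ip_j$, one concludes
\begin{equation*}
F=T+U,\qquad U\in\A,
\end{equation*}
and in particular no linear-in-$p$ term is permitted.

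Next, I would invert the quantization. The symmetric contravariant tensor $\Phi$ whose fiber polynomial equals $T+U$ has homogeneous components $\Phi_2=\tfrac12\,g^{-1}$ (the inverse metric viewed as a symmetric contravariant 2-tensor), $\Phi_1=0$, and $\Phi_0=U$. Using the Levi-Civita $\nabla$ in (\ref{quantizado}) one gets $\widehat\Phi_0=U$ (multiplication) and
\begin{equation*}
\widehat\Phi_2(f)=(-i\hbar)^2\langle\Phi_2,\nabla^2_{\textrm{sym}}f\rangle=-\tfrac{\hbar^2}{2}g^{ij}\nabla_i\nabla_jf=-\tfrac{\hbar^2}{2}\Delta f,
\end{equation*}
so Theorem \ref{tquantizado1} yields $P=-\tfrac{\hbar^2}{2}\Delta+U$.

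For the converse, if $P=-\tfrac{\hbar^2}{2}\Delta+U$, reversing the computation shows its Hamiltonian is $F=T+U$, and $D_P=D_F$ is then the evolution field of the conservative system $(M,T_2,dU)$ of Section \ref{subsectionconservatives}, which is a second-order differential equation by the Fundamental Lemma of Classical Mechanics. The only substantive step is the local identity $\partial F/\partial p_i=g^{ij}p_j$, which determines the $p$-dependence of $F$ uniquely; the rest is direct bookkeeping with the quantization/dequantization correspondence of Section \ref{tres}.
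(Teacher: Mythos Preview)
Your argument is correct and follows the same overall architecture as the paper: identify the Hamiltonian $F$ of $P$ as $T+U$, then use the quantization rule (\ref{quantizado}) to recover $P=-\tfrac{\hbar^2}{2}\Delta+U$, and run the computation backwards for sufficiency.

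The one genuine difference is in how you obtain $F=T+U$ in the necessity direction. The paper invokes the Fundamental Lemma (equation (\ref{Newton})): since $D_P$ is a second order equation, $D_P\lrcorner\,\omega_2+dT+\alpha=0$ for some horizontal $\alpha$; since $D_P$ is Hamiltonian, $\alpha$ must be exact, hence $\alpha=dU$ with $U\in\A$. You instead compute directly in canonical coordinates that the SODE condition $D_Ff=\dot f$ forces $\partial F/\partial p_i=g^{ij}p_j=\partial T/\partial p_i$, whence $F-T$ is fibrewise constant. Your route is more elementary and self-contained; the paper's route is coordinate-free and ties the result back to the Newton equation, making the appearance of the conservative-system structure $(M,T_2,dU)$ more transparent. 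Both are short and yield exactly the same conclusion.
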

 \begin{proof}
 Let us begin by checking that the tensor $\Phi$, contravariant form of the metric tensor, has as quantized operator $\widehat\Phi=-\hbar^2\Delta$. In local coordinates, with
 $T_2=g_{jk}dx^jdx^k$, is $\Phi=g^{rs}\frac{\partial}{\partial x^r}\otimes\frac{\partial}{\partial x^s}$. The expression for the second iterated covariant differential is
 $$\nabla^2f=\left(\frac{\partial^2f}{\partial x^k\partial x^j}-\Gamma_{jk}^\ell\frac{\partial f}{\partial x^\ell}\right)dx^j\otimes dx^k;$$
 by contracting with $\Phi$,
 $$\langle\Phi,\nabla^2f\rangle=g^{jk}\left(\frac{\partial^2f}{\partial x^k\partial x^j}-\Gamma_{jk}^\ell\frac{\partial f}{\partial x^\ell}\right)=\Delta f.$$
 By incorporating to $\Phi$ the factor $(-i\hbar)^2$ we see that the quantized of $\Phi$ is $-\hbar^2\Delta$.

 The Hamiltonian function corresponding to the tensor $\Phi$ is $g^{rs}p_rp_s=2T$ (where $T$ is the kinetic energy function). Finally, for the Hamiltonian $H=T+U$, the corresponding quantum operator is $(-\hbar^2/2)\Delta+U$.

 When dequantizing, we go from the operator $(-\hbar^2/2)\Delta+U$ to the Hamiltonian $T+U=H$, and, then to the Hamiltonian field $D_P$ such that $D_P\lrcorner\,\omega_2+dH=0$; $D_P$ is the field of the canonical equations for the mechanical system $(M,T,dU)$.

 Conversely, let us assume that $D_P$ is a second order differential equation. Equation \ref{Newton} gives that it holds $D_P\lrcorner\,\omega_2+dT+\alpha=0$, where $\alpha$ is horizontal; since $D_P$ is a canonical infinitesimal transformation, $\alpha$ has  to be exact, so that of the form $dU$ for some $U\in\A$: $D_P\lrcorner\,\omega_2+dH=0$, for $H=T+U$. Since $T$ is the Hamiltonian function associted with the operator $\frac 12\Phi$ as before, when quantizing it turns that $P=-\frac{\hbar^2}2\Delta+U$.
  \end{proof}

\noindent\textbf{Problem.} There is something similar to a Schrödinger equation for general non-conservative mechanical systems?
\medskip

\begin{obsparticulares}
 Let $\Phi_r$ be the homogeneous tensor of order $r$ that corresponds to $P$ by (\ref{quantizado3}). Considered as a function on $T^*M$, $\Phi_r$ is $F_r$, homogeneous of degree $r$ on the fibres. The first order partial differential equation $F_r((dS)^r)=0$ has as solutions the hypersurfaces $S=\text{const.}$ \emph{characteristic} for the differential operator $P$; they are the hypersurfaces of $M$ where the problem of initial conditions cannot be treated by the Cauchy-Kowalevsi method (for instance, for $\Delta$, the equation of characteristics is $\|dS\|^2=0$, the ``Eikonal equation''). The Hamiltonian field of $F_r$ has as solutions the \emph{bicharacteristics} of $P$. This field does not coincide, in general, with $D_P$. The field which propagates the singularities of $P$ is the Hamiltonian field of $F_r$, not the one of the total Hamiltonian of $P$, $D_P$.
\end{obsparticulares}

\section{Time dependent Schrödinger equation from a time constraint}

This is an approach to the time dependent Schrödinger equation whose point of departure will be a time constraint (see Section \ref{timeconstraint}).
\medskip

\subsection{The Hamilton-Jacobi equation for a time constraint} To begin with, let us recover the time-dependent Hamilton-Jacobi equation in this framework. Let $(\M,\T_2)$ be  an $(n+1)$-dimensional riemannian manifold and $\DD$ its geodesic field.

Now we fix the (only needed) additional datum: let us fix a regular function $t$ on $\M$.

Let us consider the time constraints $\dot t=c$ on the geodesic system given by $\DD$. The result is a field $\overline\DD$ (defined all over $T\M$) which holds  the system of equations (\ref{campotiempo1})-(\ref{campotiempo2}), now specified as
\begin{align}
& {\overline \DD}\lrcorner\,d\theta+d\T\equiv 0\,\, (\textrm{mod}\, dt)\label{Otrocampotiempo1} \\
& \overline \DD\dot t=0.
        \label{Otrocampotiempo2}
\end{align}

From now on, we will restrict ourselves to the open set $\{\dot t\ne 0\}$. Then, if $\lambda$ is the function such that (\ref{Otrocampotiempo1}) becomes the equality
$${\overline \DD}\lrcorner\,d\theta+d\T+\lambda dt=0,$$
we get, by interior product by $\overline\DD$, the expression
$\lambda=-{\overline \DD\T}/{\dot t}.$ Once inserted this value into the above relation, and using $\overline\DD t=\dot t$, we arrive to
\begin{equation*}
 {\overline \DD}\lrcorner\,\left(d\theta-d\T\wedge\frac{dt}{\dot t}\right)=0.
 \end{equation*}
 If now we take into account that $\overline\DD \dot t=0$, a little manipulation shows
\begin{equation}\label{OtroNewtontiempo3}
 {\overline \DD}\lrcorner\,d\overline\theta+\frac \T{\dot t}\,d\dot t=0,
 \end{equation}
 where we have put
 \begin{equation*}
 \overline\theta:=\theta-\frac{\T}{\dot t}dt.
 \end{equation*}
 or, in local coordinates $(t=x^0,x^1,\dots,x^n)$, $$\overline\theta=\left(p_0- \T/\dot t\right)dt+p_1dx^1+\cdots+p_ndx^n.$$

 In fact, (\ref{OtroNewtontiempo3}) replaces the couple of equations  (\ref{Otrocampotiempo1})-(\ref{Otrocampotiempo2}): let us suppose that (\ref{OtroNewtontiempo3}) holds. By interior product with ${\overline\DD}$ we get
 $$0={\overline\DD}\lrcorner\,{\overline\DD}\lrcorner\,d\overline\theta+{\overline\DD}\lrcorner\,\left(\frac\T{\dot t}d\dot t\right)=\frac\T{\dot t}\overline\DD(\dot t),$$
 so that $\overline\DD(\dot t)=0$, which is (\ref{Otrocampotiempo2}). Then,
 \begin{align*}
 0={\overline \DD}\lrcorner\,d\overline\theta+\frac \T{\dot t}\,d\dot t
   &={\overline \DD}\lrcorner\,d\left(\theta-\frac\T{\dot t}dt\right)+\frac \T{\dot t}\,d\dot t
               \\
     &     = {\overline \DD}\lrcorner\,\left(d\theta-d\left(\T/{\dot t}\right)\wedge dt\right)+\frac \T{\dot t}\,d\dot t \\
     &
     ={\overline\DD}\lrcorner\,d\theta-\frac{\overline\DD(\T)}{\dot t}dt+\overline\DD t\,d\left(\T/{\dot t}\right)+\frac \T{\dot t}\,d\dot t \\
     &
     ={\overline\DD}\lrcorner\,d\theta-\frac{\overline\DD(\T)}{\dot t}dt+\dot t\frac{\dot t d\T-\T d\dot t}{(\dot t)^2}+\frac \T{\dot t}\,d\dot t \\
      &={\overline\DD}\lrcorner\,d\theta-\frac{\overline\DD(\T)}{\dot t}dt+ d\T,
 \end{align*}
 so that (\ref{Otrocampotiempo1}) holds.
 \medskip

 In other terms,  $(\dot t/\T)\overline\DD$ is the Hamiltonian field associated to the function $\dot t$, with respect to the symplectic form $d\overline\theta$. That property completely characterizes the field $\overline\DD$.
  \medskip

 Now, we define the time-dependent Hamilton-Jacobi equation to be the partial differential equation determining the $d\overline\theta$-Lagrangian intermediate integrals of $\overline\DD$: locally, such a Lagrangian submanifold is defined by $\overline\theta=dS$, for a function $S\in\C^\infty(\M)$. Then, on the image of $dS\subset T^*\M$, (\ref{OtroNewtontiempo3}) becomes
 $$0={\overline\DD}\lrcorner\,ddS+\frac{\T}{\dot t}d\dot t=\frac{\T}{\dot t}d\dot t.$$
As a consequence, $\dot t=c$ for a constant $c\in\R$.
\medskip

In order to get the explicit Hamilton-Jacobi equation for $S$, let us simplify the presentation by taking a system of coordinates $(x^0=t,x^1,\dots,x^n)$ where the $x^\mu$ are first integrals of $\textrm{grad}\,t$ (or, in other words, $dt$ is orthogonal to $dx^\mu$, $\mu\ge 1$). Thus, we have
 $$\T_2=g_{00}(dt)^2+g_{\mu\nu}dx^\mu dx^\nu,$$
 so that the kinetic energy is
 $$\T=\frac 12 g^{00}p_0^2+\frac 12 g^{\mu\nu}p_\mu p_\nu.$$

 \begin{obs}\label{espaciotiempo}
 Once the function $t$ is fixed, an orthogonal decomposition is induced on each tangent space $T_x\M=\langle\textrm{grad}_xt\rangle\perp T^{\mathcal{S}}_x\M$, where $\langle\textrm{grad}_xt\rangle$ denotes the line spanned by the gradient of $t$ and $T^{\mathcal{S}}_x\M$ denotes its orthogonal complement ($\mathcal{S}$ is for space). So, each vector decomposes as the sum of a \emph{time component} parallel to $\textrm{grad}\,t$ and a \emph{space component}. Such decomposition extends itself to differential forms on $\M$ and also to horizontal differential forms on $T\M$ (for example, to the Liouville form). Also the very metric $\T_2$ can be decomposed as a time component, $g_{00}(dt)^2$, and a space component, $g_{\mu\nu}dx^\mu dx^\nu$. Later on, we will refer to the time and space components in that sense. It must be clear that that such components are univocally determined (after the choice of $t$).
 \end{obs}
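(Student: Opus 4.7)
The plan is to verify the remark pointwise: its content is a linear-algebraic decomposition of each tangent space, together with the observation that the splitting is natural and transports to the various categories of objects (vectors, 1-forms, horizontal 1-forms on $T\M$, and the metric itself).

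First I would reduce to linear algebra on a single tangent space $T_x\M$. Regularity of $t$ gives $dt_x \neq 0$, hence $\textrm{grad}_x t \neq 0$. I restrict attention to the open set where $\textrm{grad}\,t$ is non-null, i.e.\ $\T_2(\textrm{grad}\,t, \textrm{grad}\,t) \neq 0$; this is the region where the entry $g_{00}$ in the adapted coordinates used immediately above is nonvanishing, and is implicitly the natural domain of the whole construction. Under this hypothesis the one-dimensional subspace $\langle\textrm{grad}_x t\rangle$ is $\T_2$-non-degenerate, and a standard fact from pseudo-Euclidean linear algebra then yields the orthogonal direct sum $T_x\M = \langle\textrm{grad}_x t\rangle \oplus \langle\textrm{grad}_x t\rangle^\perp$, with explicit projections $v_t = \frac{\T_2(v,\textrm{grad}_x t)}{\T_2(\textrm{grad}_x t,\textrm{grad}_x t)}\,\textrm{grad}_x t$ and $v_s = v - v_t$. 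Uniqueness of these components is immediate from the direct-sum property.

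Next I would transport this splitting to covectors and to horizontal 1-forms on $T\M$. The musical isomorphism $T^*_x\M \to T_x\M$ induced by $\T_2$ sends $dt_x$ to $\textrm{grad}_x t$, so the same decomposition holds on cotangent spaces, with orthogonality measured by the inverse metric. For a horizontal 1-form $\alpha$ on $T\M$, recall from Section \ref{time} that locally $\alpha$ is a $\C^\infty(T\M)$-linear combination of forms $df^i$ with $f^i \in \A$; decomposing each $df^i$ into its time and space parts on $\M$ and multiplying by the scalar coefficient gives the desired time-plus-space decomposition of $\alpha$. The Liouville form $\theta = \dot d \lrcorner \T_2$ is a specific instance, and in the adapted coordinates introduced above its time and space components read $g_{00}\,\dot t\,dt$ and $g_{\mu\nu}\,\dot x^\mu\,dx^\nu$ respectively.

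Finally I would verify the metric decomposition directly from the coordinates $(x^0=t, x^1, \dots, x^n)$ with $x^\mu$ first integrals of $\textrm{grad}\,t$. By construction $dt$ and $dx^\mu$ are orthogonal with respect to the dual metric, i.e.\ $g^{0\mu} = 0$, so the inverse metric block-diagonalizes; inverting block by block yields $\T_2 = g_{00}(dt)^2 + g_{\mu\nu}\,dx^\mu\,dx^\nu$, exactly the stated time/space split. The principal (though mild) obstacle is the non-null hypothesis on $\textrm{grad}\,t$: without it $\langle\textrm{grad}_x t\rangle$ is contained in its own orthogonal complement and no direct-sum splitting exists. Handling this by restriction to the open set where $\T_2(\textrm{grad}\,t, \textrm{grad}\,t) \neq 0$ is both harmless and consistent with the rest of Section 4.
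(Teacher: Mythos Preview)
Your proposal is correct. In the paper this statement is a remark offered without proof: it is treated as an evident consequence of having chosen adapted coordinates $(x^0=t,x^\mu)$ with the $x^\mu$ first integrals of $\textrm{grad}\,t$, so that $\T_2$ already appears in the block-diagonal form $g_{00}(dt)^2+g_{\mu\nu}dx^\mu dx^\nu$. Your write-up supplies the linear-algebraic justification (non-nullity of $\textrm{grad}\,t$, the pseudo-Euclidean direct-sum splitting, transport via the musical isomorphism, and the block-inversion argument) that the paper leaves implicit; this is entirely in line with the paper's intent and makes explicit the one genuine hypothesis---that $\T_2(\textrm{grad}\,t,\textrm{grad}\,t)\neq 0$---which the paper assumes tacitly through its use of $g_{00}$ and $g^{00}$.
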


 In this coordinates $p_0=g_{00}\dot t$
 and the  time-coefficient of $\overline\theta$ is
 $$p_0-\frac \T{\dot t}=\frac 12\, p_0-\frac 12 \frac{g^{\mu\nu}}{\dot t}p_\mu p_\nu=\frac 12\, g_{00}\dot t-\frac 12 \frac{g^{\mu\nu}}{\dot t}p_\mu p_\nu.$$

 So that, if we put $\overline\theta=\overline p_jdx^j$, the hypersurfaces of constant $\dot t$ can be (locally and) equivalently described as
   \begin{equation}\label{cambiocoordenadastiempo}
   \dot t=c\qquad \Leftrightarrow\qquad \overline p_0+\frac 12 \frac{g^{\mu\nu}}{c}\overline p_\mu \overline p_\nu-\frac 12\, g_{00}c=0
   \end{equation}

   Then, the image of $dS$ is parameterized by $\overline p^j= \partial S/\partial x^j$. Finally, the explicit time-dependent Hamilton-Jacobi equation  becomes
   $$\frac{\partial S}{\partial t}+\frac 12 \frac{g^{\mu\nu}}{c}\frac{\partial S}{\partial x^\mu} \frac{\partial S}{\partial x^\nu}-\frac 12\, g_{00}c=0.$$
    \medskip

  In the case $c=1$ (so that $\dot t=1$ and $t$ plays the ``true'' role of time), the above equation becomes the usual one,
  $$\frac{\partial S}{\partial t}+H(t,x^\mu,\partial S/\partial x^\mu)=0,$$
  where $$H:=T(dS)+U,\quad T(dS):=\frac 12 g^{\mu\nu}\frac{\partial S}{\partial x^\mu} \frac{\partial S}{\partial x^\nu}\quad\text{and}\quad U:=-\frac 12\, g_{00}.$$
  \bigskip

  \subsection{Quantization of the time constraint}
  Let us fix $c=1$ and consider the \emph{space kinetic energy} function:
  $$T:=\frac 12 {g^{\mu\nu}}\overline p_\mu \overline p_\nu$$
  (so that $T$ well could be denoted by ``$\T^\mathcal{S}$'').
  The quantization of $T$ following  Section \ref{tres} does not take us to the ``space Laplacian'' but an additional term will appear.
  According the calculations in the mentioned section, the quantum operator associated with $T$ is:
  $$\widehat T=-\frac{\hbar^2}2 g^{\mu\nu} \left(
              \frac{\partial^2}{\partial x^\mu\partial  x^\nu}-\Gamma_{\mu\nu}^\gamma\DP {}{x^\gamma}-\Gamma_{\mu\nu}^0\DP {}{t}\right)
                 =-\frac{\hbar^2}2 \left(
              \Delta^{\mathcal{S}}-g^{\mu\nu} \Gamma_{\mu\nu}^0\DP {}{t}\right),$$
   where
  $\Delta^{\mathcal{S}}$ denotes the ``Laplacian'' in the space coordinates $x^\mu$ (so that, here, $t=x^0$ is just a parameter):
  $$\Delta^{\mathcal{S}}:=g^{\mu\nu} \left(
              \frac{\partial^2}{\partial x^\mu\partial  x^\nu}-\Gamma_{\mu\nu}^\gamma\DP {}{x^\gamma}\right).$$
  But, by the choice of the coordinate system,
   \begin{equation}\label{invariantetermino}
  g^{\mu\nu}\Gamma_{\mu\nu}^0=-\Ddelta t-\textrm{grad}(t)(\log\sqrt{|g_{00}|}),
  \end{equation}
  as can be derived from $\Ddelta\, t=-g^{ij}\Gamma_{ij}^0$, ($i,j=0,1,\dots,n$) and $\Gamma_{00}^0=(1/2)\textrm{grad}(t)(g_{00})$.
  In this way,
  $$\widehat T=-\frac{\hbar^2}2\left(\Delta^{\mathcal{S}}-\left[\Ddelta +\textrm{grad}\left(\log\sqrt{|g_{00}|}\right)\right](t)\DP{}{t}\right),$$

  For this reason, the quantization of the constraint given by the right side member of (\ref{cambiocoordenadastiempo}) (in the case $c=1$) is
   \begin{equation}\label{pseudoSchrodinger}
   (i\hbar+\kappa\hbar^2)\DP{\Psi}t=\left(-\frac{\hbar^2}{2}\Delta^{\mathcal{S}}+U\right)\Psi,
      \end{equation}
   where
   \begin{equation*}
   \begin{cases}
  U:=-\frac 12\, g_{00},\\[0.5em]
  \kappa:=-\frac 12\left[\Ddelta +\textrm{grad}\left(\log\sqrt{|g_{00}|}\right)\right](t).
   \end{cases}
   \end{equation*}
  \medskip

  Finally, let us assume that $\T_2$ is projectable by $\textrm{grad}\,t$ (that is, every $g_{\mu\nu}$, $\mu,\nu=1,\dots,n$, is a first integral of $\textrm{grad}\,t=g^{00}\partial/\partial t$) and, in addition, $g_{00}$ is also a first integral of $\textrm{grad}\,t$. Then, (\ref{invariantetermino}) equals $g^{\mu\nu}(-1/2)g^{00}\partial{g_{\mu\nu}}/\partial t$, in such a way that, under the new assumption, $\kappa=0$ and
  (\ref{pseudoSchrodinger}) becomes
    $$i\hbar\DP{\Psi}t=\left(-\frac{\hbar^2}{2}\Delta^{\mathcal{S}}+U\right)\Psi,$$
    which is the usual and well known expression for the time-dependent Schrödinger equation (except for the superindex $\mathcal{S}$ which is just a notation issue).

\bigskip

  \section{Waves in Classical Mechanics}

\subsection{Time. Action. Length}

Before having a metric on $M$, we can not talk about time of travel of a (no-parameterized) curve in $T^*M$.

The \emph{action} is a function defined on the set of all curves $\gamma$ in $T^*M$, namely, the integral $\int_\gamma\theta$.

Once a metric $T_2$ on $M$ is given, the identification of $TM$ with $T^*M$ produced by that metric allows us to transport from one to the other fiber bundle the notions of time and action. The Liouville form $\theta$ transported to $TM$ gives a representative $\theta/\dot\theta$ of the class of time on the open set out of the quadric of light $\dot\theta=0$. In the case of $T_2$ being positive definite, out of $0$ section. The form $\theta/\dot\theta$ allows us to associate  a ``time'' to each curve in $T^*M$ or $TM$, so generalizing the given one for trajectories of second order differential equations.

Once given the metric $T_2$ on $M$, we define on $TM$ the \emph{length element} $\lambda:=\theta/\|\theta\|=\theta/\sqrt{|\dot\theta|}$.

When $T_2$ is positive definite, $\lambda$ is defined on the whole $TM$ except on the 0 section. In general, $\lambda$ is defined out of the ``quadric of light'' $\dot\theta=0$.

The form $\lambda$ is invariant under the group of homotheties on fibres of $TM$, whose infinitesimal generator is $\dot x^j\partial/\partial \dot x^j$.

The manifold of orbits of this group is the space $J_1^1M$ of 1-jets of curves of $M$. It follows that $\lambda$ is projectable onto $J_1^1M$ (see \cite{Saunders,Primary spectrum} and references therein). Each (no parameterized) curve $\gamma$ in $M$ canonically gives us a curve $\widetilde\gamma$ in $J_1^1M$. The integral $\int_{\widetilde\gamma}\lambda$ is the \emph{length of $\gamma$}.

It does not make sense to talk about time of travel or action along no parameterized curves in $M$, but talk about length.

Length is the ``proper time'' in Relativity.

Parametrization of the trajectories of a classical mechanical system is carried out by the time (the class of time). When, in addition, the trajectories of the system are also parameterizable by the proper time, out of the quadric of light, we have $\theta/\dot\theta=\text{const.}\theta/\sqrt{|\dot\theta|}$ on each trajectory (const. depends on the trajectory). That means $\dot\theta$ is constant along each trajectory. This is the distinctive characteristic of ``relativistic'' systems.
A mechanical system $(M,T_2,\alpha)$ is relativistic if and only if the work form $\alpha$ belongs to the contact system \cite{RelatividadMunoz,RM}. It follows  that there are no systems which are simultaneously relativistic and conservative, except the geodesic one, $\alpha=0$.

Condition $\dot\theta=\text{const.}$ on each trajectory also characterizes those systems whose trajectories are parameterizable by the action. For our subject of study, which are the conservative systems, the only ones satisfying the mentioned condition are the geodesic ones.

The fact that length of a curve in $M$ is not depending on its parametrization allows us to translate the Maupertuis Principle  to the language of geodesics, as it is well known (although often explained with little clarity): let us consider a conservative mechanical system $(M,T_2,dU)$, with Hamiltonian $H$. For a given energy level  $H=E$, let us consider the metric $T_{2,E}:=2(E-U)T_2$. The translation of the Liouville from $\theta$ form $T^*M$ to $TM$ by means of that metric is $\theta_E:=2(E-U)\theta$, where $\theta$ denotes the translation when we use $T_2$ instead.  The length element for $T_{2,E}$ is
$$\lambda_E=\frac{\theta_E}{\sqrt{|\dot\theta_E|}}=\frac{2(E-U)\,\,\theta}{\sqrt{2(E-U)}\sqrt{|\dot\theta|}}=\sqrt{\frac{2(E-U)}{2T}}\,\theta;$$
as, on the hypersurface $H=E$ of $TM$ we have $E-U=T$, the specialization of the 1-forms $\lambda_E$, $\theta$, to $H=E$ coincide. By parameterizing each given curve in $M$ in such a way that its lifting to $TM$ is included into $H=E$, the Maupertuis Principle derived from (\ref{Newtonconservativo}) is translated in this way: for each energy level $H=E$, the trajectories of the system in $M$ are the geodesic of the metric $T_{2,E}=2(E-U)T_2$, and the parametrization of the trajectories (by the ``time'' in $TM$) is the one making the lifting to each of these trajectories belong in the hypersurface $H=E$ of $TM$. (A computation gives $$D=D_{G,E}-\frac{\dot U}{E-U}V$$
on the hypersurface $H=E$. $D_{G,E}$ denotes the geodesic field for the metric $T_{2,E}$, $V$ the infinitesimal generator  of the group of homotheties in fibres, and $D$ is the field which governs  the evolution of the system $(M,T_2,dU)$. By observing through the modified metric, $T_{2,E}$, it turns out that the motion of the system is ``rectilinear'', with a force applied in the direction of the motion given by the second term of $D$).
Coming back to the beginning of this section: on each trajectory (in $TM$) of a mechanical system we define, once fixed an initial point, the functions
``action'' $S=\int\theta$; ``time'' $t=\int\theta/\dot\theta$; length $\ell=\int\theta/\sqrt{|\dot\theta|}$; On the given curve, we have
$$\frac{dS}{dt}=2T,\quad \frac{dS}{d\ell}=\sqrt{2T},\quad \frac{d\ell}{dt}=\sqrt{2T},\quad \frac{d\ell}{dS}=\frac 1{\sqrt{2T}}.$$

\subsection{Wave fronts on the Hamilton-Jacobi solutions}

Let us consider the configuration space $M$, with metric $T_2$ of arbitrary signature. The arguments we will use are local in character, without the need of remembering it  every time. In order not to obscure the development of ideas with details of rigor more or less obvious, we will limit ourselves to consider states in $TM$ out of the ``quadric of light'' $\dot\theta=0$. In the most classical case, with $T_2$ positive definite,  the states of equilibrium will be left out of our considerations.

Let $X$ be an $r$-dimensional submanifold of $M$. The set $\widetilde X\subset T^*M$ which consist of all the 1-forms $\alpha_x\in T_x^*M$, with $x\in X$, annihilating $T_xX$, will be called \emph{conormal bundle} of $X$ in $M$. It is a submanifold of $T^*M$ of dimension $n=\dim M$, whatever the number $r$ is. From the very definition of the Liouville form $\theta$ it follows that the specialization of $\theta$ to $\widetilde X$ is 0.

In the isomorphism $TM\simeq T^*M$ determined by the metric,  the manifold corresponding to $\widetilde X$ in $TM$ is the \emph{normal bundle of $X$ in $M$}, which consists of vectors $v_x\in T_xM$, $x\in X$, orthogonal to $T_xX\subset T_xM$.

Conormal bundles $\widetilde X$ are a particular instance of Lagrangian manifolds, which were considered by Lie in his works on contact transformations and first order partial differential equations \cite{Lie1,Lie2}. For this reason, we will call the submanifolds $Z$ of $T^*M$ (or their translations to $TM$) where $\theta$ specializes as $0$, \emph{Lie manifolds}. Such  submanifolds must be of dimension lower or equal than $n$. Those of dimension $n$ are Lagrangian. If $Z$ is a Lie submanifold of dimension $n$, in the open set where the rank of the projection $Z\to M$ reach the maximum $r$, the image of $Z$ is an $r$-dimensional manifold $X$ and $Z=\widetilde X$ (in the open set where $Z$ projects on $X$).

If $Z$ is a Lie manifold, the specializations to $Z$ of the forms $\theta$ (action), $\theta/\dot\theta$ (time), $\theta/\|\theta\|$ (length) are all of them 0: every path in $Z$ joining two of its points $\alpha$, $\beta$, has action 0, time 0 and length 0. For this reason, Lie manifolds are the natural generalization of the notion of point; in fact, the fibre into $T^*M$ over a point in $M$ is the most obvious example of Lie manifold. None of the second order differential equations is tangent to $Z$ (because, such an equation $D$  holds $\langle \theta,D\rangle=\dot\theta=2T\ne 0$). Thus, to join $\alpha$ and $\beta$ with a ``mechanically possible'' trajectory, that is to say, the lifting to $TM$ of a parameterized trajectory of $M$, it is necessary to get out of $Z$. Lie said that the points of $Z$ are united (\emph{Vereignete}). Perhaps it will be appropriate to call them \emph{entangled} because into the set $Z$ there is no distance or time.

Let us consider the mechanical system $(M,T_2,dU)$, with Hamiltonian $H=T+U$. Let $D$ be the corresponding Hamiltonian vector field: $D\lrcorner\,\omega_2+dH=0$.

On each hypersurface $H=E$ of $T^*M$, the radical of the specialization of $\omega_2$ is $D$ (and its multiples). Without need of applying the Jacobi-Lie theory remembered in \ref{subsectionconservatives}, by taking local coordinates in $H=E$ with which $D$ reduces to canonical form $\partial/\partial z$, it is checked that the expression $\omega_2|_{H=E}$ does not contain $dz$ and its coefficients are independent of $z$.

Let $Y$ be a Lie manifold of dimension $n-1$, contained into the hypersurface $H=E$. The second order differential equation $D$ is no-tangent to $Y$ at any point, from which it follows that $n-1$ of the first integrals of $D$ are independent in $Y$. Local equations of $Y$ are, then, of the form $H=E$, $\beta^i=f^i(\alpha^1,\dots,\alpha^{n-1})$, ($i=1,\dots,n-1$), $z=g(\alpha^1,\dots,\alpha^{n-1})$, where the $\alpha$'s and the $\beta$'s are first integrals of $D$ independent of $H$. The manifold $Z$ which results from suppressing the last equation (``letting $z$ free''), has dimension $n$, is solution of $H=E$, and the specialization of $\omega_2$ to it vanishes, because the specialization of $\omega_2$ to $Y$ is null, and $\omega_2$ does not contain $dz$ or $z$ in its expression in $H=E$.

The manifold $Z$ is obtained by ``sliding'' $Y$ along the integral curves of the field $D$. $Z$ is a Lagrangian manifold where $H=E$. It is, therefore, a solution of the Hamilton-Jacobi equation $H(dS)=E$. Here, $S$ is the function (determined up to a additive constant) such that $\theta|_Z=dS$. Local equations of $Z$ are, then, the $p_j=\partial S/\partial x^j$ ($j=1,\dots,n$), if $\theta=p_jdx^j$.

     By taking an arbitrary submanifold $X$ of $M$, its conormal bundle $\widetilde X$, and the Lie manifold $Y:=\widetilde X\cap\{H=E\}$, the before explained method  (``method of Cauchy characteristics'', as interpreted by Lie) gives us solutions of the Hamilton-Jacobi equation when the differential equations of motion  are previously solved, that is to say, if previously we know how to reduce $D$ to its canonical form $\partial/\partial z$. The ``Hamilton-Jacobi method'' to solve the motion differential equations starts, inversely, from the knowledge of a complete integral of the Hamilton-Jacobi equation in order to integrate the system of differential equations defined by $D$.

     In the Lagrangian manifold $Z$ obtained from the manifold of initial conditions $X$ and energy level $H=E$, the Liouville form $\theta$ specializes as $dS$. Thus, each hypersurface $S=\textrm{const.}$ of $Z$ is a Lie manifold. These manifolds,  when projected on $M$, will be called \emph{wave fronts} associated with the manifold $Z$, that is to say, with the field $u:=\textrm{grad}\,S$, intermediate integral of the equations of motion. The field $u$ is orthogonal at each point to the front wave passing through that point.

     A point $x_0\in M$ can be taken as manifold of initial conditions. In such a case, $\widetilde x_0=T^*_{x_0}M$ (or $T_{x_0}M$, seen in $TM$) and, for the energy level $H=E$, the manifold of initial conditions $\widetilde x_0\cap\{H=E\}$ is the quadric $T(v_0)+U(x_0)=E$. For each given $v_0$ we get a trajectory of $D$ with initial condition $v_0$. The union of all these trajectories fills an $n$-dimensional manifold $Z$ in $TM$. The projection of $Z$ onto $M$ is a neighborhood of $x_0$, where  the action function is defined, which is usually written $S(x_0,x)$, referring to the initial condition $x_0$. $S(x_0,x)$ is the integral $\int_\gamma\theta$ here $\gamma$ is the solution of the second order differential equation $D$ with initial condition $(x_0,v_0)$. In this way, a function can be defined as ``action'', in a neighborhood of the diagonal in $M\times M$, for each energy level $E$.

     If $X\subset M$ is any manifold of initial conditions, given $x_0\in X$ and $v_0\perp T_{x_0}X$ in the energy level $H=E$, the Lagrangian manifold $Z$ built from $\widetilde X\cap\{H=E\}$ has the trajectory of $D$ with initial condition $(x_0,v_0)$ in common with the one built one from $\widetilde x_0\cap\{H=E\}$. Along this trajectory, the action function $S$ defined on $Z$, coincides with the function $S(x_0,x)$: both of them equal $\int_\gamma\theta$, $\gamma$ being the trajectory of $D$. In addition, the wave fronts at each point of this trajectory are tangent, because both are orthogonal to the projection of $D$ in $M$ at the given point. This is a form of the ``Huygens Principle'': each wave front of the ``action'' associated with a solution of Hamilton-Jacobi equation is the enveloping  of the wave fronts coming from the point sources at the given manifold of initial conditions, and at the same level of energy.

\subsection{Particles time - Waves time}\label{particleswavestime}

Like in the previous section, the considerations will be local in character. Let us fix a manifold of initial conditions $X\subset M$, which determines the Lie manifold $\widetilde X$ in $T^*M$ (or seen in $TM$, keeping notation). Each value of the Energy $H=E$ fixes an initial wave front $X_E:=\widetilde X\cap\{H=E\}$, whose propagation along the trajectories of $D$, which rules the evolution of the system $(M,T_2,dU)$, is a Lagrangian manifold $Z_E$, in which the Liouville form $\theta$ is an exact differential: $\theta|_{Z_E}=dS$. The function $S$ is a solution of the Hamilton-Jacobi equation $H(dS)=E$ (or $H(\textrm{grad}S)=E$ in $TM$). The field $u=\textrm{grad}S$ in $M$ is the ``field of velocities'' of the virtual  particles that move in $M$ obeying the second order differential equation  $D$. At each parameterized trajectory of $u$ we have $u=d/dt$. The lifting of this parameterized trajectory to $TM$ is a trajectory of $D$, contained in the Lagrangian manifold $Z_E$. In $Z_E$ there is a well defined \emph{time} function, that parameterizes  each trajectory of $D$ starting from the manifold $X_E$ of initial conditions: it is
$$t=\int_\gamma\frac\theta{\dot\theta}=\int_\gamma \frac{dS}{2(E-U)},$$
on each trajectory $\gamma$ departing from an initial point in $X_E$.

 For each Lagrangian manifold $Z_E$, the wave fronts $S=\textrm{const.}$, when projected onto $M$, are orthogonal to the field $\textrm{grad}S$, that is the field of velocities of the virtual particles that move according to the equations of evolution of the mechanical system.

 We define in $Z_E$ (and so, also in $M$) the function $\tau:=S/E$. By taking $\tau$ as the ``time'', the wave fronts in $Z_E$ (projected to $M$) move at a steady pace. The velocity of passage from a wave front to another according to the orthogonal trajectories, measured with the time $\tau$ is
 $$v=\frac{d\ell}{d\tau}=\frac{\theta/\|\theta\|}{\theta/E}=\frac E{\|\theta\|}=\frac E{\sqrt{\dot\theta}}=\frac E{\sqrt{2(E-U)}},$$
 while the speed of the displacement of the virtual particles, measured with the time given by the ``class of time'' for all the second order differential equations is
 $$\|u\|=\frac{d\ell}{dt}=\frac{\theta/\|\theta\|}{\theta/\dot\theta}=\frac {\dot\theta}{\|\theta\|}=\sqrt{2(E-U)}.$$

 In \cite{QuantizacionMunozAlonso}, \S 6, we note a  misunderstanding between $t$ and $\tau$ found in \cite{SchrII}.

 A known De Broglie  formula \cite{de Broglie} is, in our notation,
 $$\|u\|=\frac{\partial E}{\partial (E/v)},\quad\text{$U$ in a fixed value}.$$

 This formula offers the following interpretation: $v$ is the velocity of propagation of the phase of the train of waves. $\|u\|$, the velocity  of the particles, is the ``velocity of the group'' of the train of waves.  What happens is that these speeds are measured with different times.

 When the potential $U$ vanishes (geodesic system), the Hamiltonian $H$ is $T$ and on each solution of the Hamilton-Jacobi equation we have $T=(1/2)\dot\theta=E$. For the trajectories contained in this solution, the forms of action $\theta$, length $\theta/\|\theta\|=\theta/\sqrt{2E}$ and time $\theta/\dot\theta=\theta/2E$, are proportional and the relationship between particles time $t$ and waves time $\tau$ is $d\tau/dt=2$. In the absence of potential, there is no difference between the time of particles and the time of waves, except for the unit of measure.

\subsection{The equation ${\partial S}/{\partial E}=t$}

This is the equation (7) in \cite{SommerfeldMecanica}, \S44, which is important in the relationship between ``particles time'', that will be $\partial S/\partial E$, and ``wave time'', that equals $S/E$, on each trajectory of the system. Let us give a proof of that formula adapted to the  language  we have been using. The arguments that follow are local in character. The validity of the formula for any trajectory is derived from its local fulfillment, joining pieces.

Let $(M,T_2,dU)$ be a conservative system with Hamiltonian $H=T+U$ and Hamiltonian field $D$ that governs the evolution.

For each  point $x_0\in M$, the conormal bundle $\widetilde x_0$ equals $T_{x_0}^*M$ (or $T_{x_0}M$ translated to $TM$). Given the energy level $H=E$, the manifold $Y_E=\widetilde x_0\cap\{H=E\}$ is the quadric of $T_{x_0}M$ of equation $T(v_{x_0})+U(x_0)=E$, $v_{x_0}\in T_{x_0}M$. For each $v_{x_0}\in Y_E$, there exists an interval $I$ centered at the origin of $\R$ and a curve $\widetilde\gamma\colon I\to TM$ solution of the vector field $D$ with initial condition $\widetilde\gamma(0)=v_{x_0}$. The projection of $\widetilde\gamma$ to $M$ is a curve $x=\gamma(t)$, solution of second order differential equation $D$ with initial conditions $\gamma(0)=x_0$, $\gamma'(0)=v_{x_0}$. For each $x=\gamma(t)$ in the curve, let us put $$S(E;x_0,x):=\int_{\gamma, 0}^t\theta.$$
All Lagrangian submanifolds contained into the hypersurface $H=E$ are generated by trajectories of $D$, so that those passing through $v_{x_0}$ have in common the curve $\widetilde\gamma$.

On each one of these manifolds, $Z_E$, there is a function ``action'', primitive of $\theta|_{Z_E}$, whose value at $x=\gamma(t)$ is $S(E;x_0,x)$ if we fix the value of the primitive at $0$ for $v_{x_0}$.

As it is well known (and remembered above) the Maupertuis Principle allows us to prove that the trajectories of the field $D$ in the energy level $H=E$ are projected to $M$ as geodesic paths for the metric $2(E-U)T_2$ (in general, the proper parametrization of these geodesic curves does not match with that of trajectories of $D$). The properties of the exponential map $T_{x_0}M\to M$ show that the trajectories of $D$ starting at $Y_E$ are projected to $M$ by filling a neighborhood of $x_0$.

Let us consider on $TM$ (or $T^*M$) the field $W=V/\dot\theta$, where $V$ is the infinitesimal generator of the group of homotheties: $V=\dot x^j\partial/\partial \dot x^j=p_j\partial/\partial p_j$. The field $W$ is vertical and holds $WH=1$. For that, $W$ leaves  the subring of $\C^\infty(TM)$ generated by $\A$ and $H$ stable. In local coordinates, for that subring $W=\partial/\partial H$.

The trajectories of the field $W$ in $TM$ are the same as those of  $V$ (lines passing through 0 on each $T_xM$), but are parameterized by the energy $H$. The one-parametric group generated by $W$ changes, in $TM$, the hypersurfaces $H=\textrm{const.}$ into each other. The field $W$ is not Hamiltonian, so that the group it generates does not transform, in general, Lagrangian submanifolds into Lagrangian submanifolds. On the Liouville form $\theta$ we have $L_W\theta=W\lrcorner\,\omega_2=\theta/\dot\theta$, which belongs to the class of time ($L_W$ denotes the Lie derivative along the vector field $W$).  For each $\widetilde\gamma$ in $TM$, let us denote by $\widetilde\gamma_\epsilon$ the transformed of $\widetilde\gamma$ by the one-parametric group generated by $W$ when the parameter equals $\epsilon$. We have,
$$\left[\frac d{d\epsilon}\int_{\widetilde\gamma_\epsilon}\theta\right]_{\epsilon =0}=\int_{\widetilde\gamma}L_W\theta=\int_{\widetilde\gamma}\frac\theta{\dot\theta}.$$
Now, if $\widetilde\gamma$ is a solution of $D$ considered above, we get
$$\lim_{\epsilon\to 0}\frac{\int_{\widetilde\gamma_\epsilon}\theta-S(E;x_0,x)}\epsilon=\int_{\widetilde\gamma}\frac\theta{\dot\theta}=t,$$
time of travel along $\gamma$ between $x_0$ and $x$.
$\widetilde\gamma_\epsilon$ is a curve into the energy level $H=E+\epsilon$ that projects itself in $M$ as a curve joining $x_0$ with $x$, and it is parameterized by the same  parameter as the curve $\gamma$.

Let $\widetilde\gamma'_\epsilon$ be the curve solution of $D$ in the energy level $H=E+\epsilon$ joining $x_0$ and $x$ and parameterized as solution of $D$, that is to say, by the class of time. The final time for $\widetilde\gamma'_\epsilon$ will be $t'$. By using a correspondence between $[0,t]$ and $[0,t']$ (for example, by means of an homothety), we establish a univocal correspondence between  the points of $\widetilde\gamma$, $\widetilde\gamma_\epsilon$ and $\widetilde\gamma_\epsilon'$. In such correspondence the difference of coordinates between corresponding points, is of order at most $\epsilon$.
Since $(D)$ is the radical of $d\theta$ on the hypersurface $H=E+\epsilon$, the difference $\int_{\widetilde\gamma_\epsilon}\theta-\int_{\widetilde\gamma'_\epsilon}\theta$ is of order at least $\epsilon^2$. For this reason we substitute $\int_{\widetilde\gamma_\epsilon}\theta$ by $\int_{\widetilde\gamma'_\epsilon}\theta=S(E+\epsilon;x_0,x)$, and it gives finally
$$\frac{\partial S(E;x_0,x)}{\partial E}=t,$$
as stated.

\bigskip

\section{Time dependent Schrödinger equation from a Hertz constraint}\label{SchrodingerHertz}

Let us consider a conservative mechanical system $(M,T_2,dU)$, with Hamiltonian $H=T+U$.

By adopting the point of view of Hertz, let us consider this system as the projection of a geodesic system $(\M,\T_2,0)$, like in Section \ref{Hertz}. Let us keep that notation. The metric $\T_2=g_{00}(dx^0)^2+g_{\mu\nu}dx^\mu dx^\nu$ gives a kinetic energy $\T=(1/2)g^{00}p_0^2+(1/2)g^{\mu\nu}p_\mu p_\nu$, with $g^{00}$ independent of $x^0$, which gives the first integral of the equations of motion $p_0=P_0=\textrm{const.}$
  The restriction of $\T$ to the manifold $p_0=P_0$ is $(1/2)g^{00}P_0^2+T$, which coincides with $H$ if $U=(1/2)g^{00}P_0^2$. In the projection $\M\to M$ (in which $\A$ is the ring of first integrals of the field $\textrm{grad}\,x^0$ in $\M$), the term $(1/2)g^{00}p_0^2$ of the kinetic energy in $T\M$ is ``transferred'' from the hypersurface $p_0=P_0$ to $M$ as potential energy. This transference is possible because $g^{00}$ does not depend on $x^0$.

  The Hamilton-Jacobi equation for the Hamiltonian $\H=\T$  is
  \begin{equation}\label{HJHertz}
  g^{00}\left(\frac{\partial \SS}{\partial x^0}\right)^2+g^{\mu\nu}\frac{\partial \SS}{\partial x^\mu}\frac{\partial \SS}{\partial x^\nu}=2E.
  \end{equation}

  Functions $\H$, $p_0$ are in involution with respect to the symplectic structure in $T^*\M$, so we can apply the method of Jacobi and looking for common solutions with (\ref{HJHertz}) and the Hamilton-Jacobi  equation  corresponding to the ``Hamiltonian'' $p_0$. That equation is $\partial\SS/\partial x^0=P_0$, for each constant $P_0$. Let us put, then,
  \begin{equation}\label{HJHertz2}
  \SS=P_0x^0+S(x^1,\dots,x^n),
  \end{equation}
  with what (\ref{HJHertz}) is
  \begin{equation}\label{HJHertz3}
  g^{00}P_0^2+g^{\mu\nu}\frac{\partial S}{\partial x^\mu}\frac{\partial S}{\partial x^\nu}=2E,
  \end{equation}
  that is, the Hamilton-Jacobi equation for $(M,T_2,dU)$, when $U=(1/2)g^{00}P_0^2$, as above.

  Given the mechanical system $(M,T_2,dU)$ and a solution $S$ of the Hamilton-Jacobi equation:
  \begin{equation}\label{HJHertz4}
  g^{\mu\nu}\frac{\partial S}{\partial x^\mu}\frac{\partial S}{\partial x^\nu}+2U=2E,
  \end{equation}
  we come back, by defining $\M:=M\times\R$, with metric $\T_2$ as above, taking $P_0=-E/c$ (here, $c$ is a universal constant) and $g_{00}$ defined by $U=(1/2)g^{00}P_0^2$.
  Then, each solution of (\ref{HJHertz4}) gives us a solution  (\ref{HJHertz2}) of (\ref{HJHertz}). Conversely, each solution of (\ref{HJHertz}) of the form (\ref{HJHertz2}) gives a solution of (\ref{HJHertz4}).

  The construction of the action $\SS$ starting from a manifold of initial conditions in $\M$ goes as follows: we fix as manifold of initial conditions in $\M$ a  submanifold $M_0$ parameterized by $M$. In local coordinates, $M_0$ has the equations \begin{equation}\label{HJHertz41}
  x^0=\xi(x^{1},\dots,x^n).
  \end{equation}
  The 1-forms of $T^*\M$ that specialize as 0 in $M_0$ are the multiples of the 1-form  $dx^0-(\partial\xi/\partial x^{\mu})dx^{\mu}$, so that the equations of the conormal bundle $\widetilde M_0$ in $T^*\M$ are
  (\ref{HJHertz41}) joint with
  \begin{equation}\label{HJHertz51}
  p_{\mu}=-p_0\frac{\partial\xi}{\partial x^{\mu}}, 
  \end{equation}
  and, as coordinates on $\widetilde M_0$ serve $p_0,x^{1},\dots,x^n$.

  The submanifold of $\widetilde M_0$ corresponding to the energy level $\T=E$  has as equations (\ref{HJHertz41}), (\ref{HJHertz51}) and
  \begin{equation}\label{HJHertz52}
  g^{00}p_0^2+g^{\mu\nu}p_\mu p_\nu=2E. 
  \end{equation}
  By slicing with the hipersurface $p_0=P_0$ of the Hertz constraint  it remains the submanifold of $\widetilde M_0$, of dimension $n-1$, with local equations (\ref{HJHertz41}), (\ref{HJHertz51}), (\ref{HJHertz52}) and
  \begin{equation}\label{HJHertz53}
  p_0=P_0. 
  \end{equation}

  This submanifold $\Y_{n-1}$ is propagated along the trajectories of the (commuting) fields $\DD$ (the Hamiltonian field in $T\M$ corresponding to $\H=\T$), $\partial/\partial x^0$ (the Hamiltonian field in $T\M$ corresponding to $p_0$) and generates a Lagrangian manifold $\Z_{n+1}$, parametrizable by $x^0,x^1,\dots,x^n$, where it holds $\theta|_{\Z}=d\SS$ for a function $\SS=\SS(x^0,x^1,\dots,x^n)$. From its very construction, $\Z$ is a  solution of the Hamilton-Jacobi equation (\ref{HJHertz}) and also of the equation $\partial\SS/\partial x^0=P_0$ (as it is  derived from condition (\ref{HJHertz53}) for $\Y$). In particular $\SS=P_0x^0+S(x^1,\dots,x^n)$ for some function $S$. Thus, the equations of $\Z$ are:
  \begin{equation}\label{HJHertz6}
  p_0=\frac{\partial\SS}{\partial x^0}=P_0,\quad p_\mu=\frac{\partial \SS}{\partial x^\mu}=\frac{\partial S}{\partial x^\mu},\quad (\mu=1,\dots,n).
  \end{equation}

  The projection $T\M\to TM$ (from which it is derived $T^*\M\to T^*M$, via the metric) sends $\Y_{n-1}$ to the submanifold $Y_{n-1}$ obtained from the equations (\ref{HJHertz41}), (\ref{HJHertz51}), (\ref{HJHertz52}), (\ref{HJHertz53}) putting aside $x^0$ and considering the last of them as a constraint, that allows us to go without $p_0$. The equations of $Y_{n-1}$ are
  \begin{align}
  & p_\mu=-P_0\frac{\partial\xi}{\partial x^\mu}\quad (\mu=1,\dots,n)\label{HJHertz7}\\[0.5em]
  & g^{00}P_0^2+g^{\mu\nu}p_\mu p_\nu=2E\label{HJHertz7'}
  \end{align}

  Expression (\ref{HJHertz7}), when substituted into (\ref{HJHertz7'}), results on the equations of an hypersurface $X_{n-1}\subset M$. So that (\ref{HJHertz7'}) selects in $\widetilde X_{n-1}$ the energy level $H=E$, with Hamiltonian $H=T+U$, $U=(1/2)g^{00}P_0^2$.

  Manifold $\Z_{n+1}$ is projected into $TM$ as the submanifold $Z_n$, whose equations are the second group of (\ref{HJHertz6}). $Z_n$ is obtained by sliding $Y_{n-1}$ along the trajectories of the Hamiltonian field $D$ in $TM$, projection of $\DD$ in $T\M$.

  The time of the waves $\mathbb{t}$ in $\Z_{n+1}$ is
  \begin{equation}\label{HJHErtz8}
  \t=\frac{\SS}{E}=\frac{P_0}E x^0+\frac SE=-\frac{x^0}c+\tau
  \end{equation}
  where $\tau$ is the time of waves in $Z_n$. As $(\M,\T_2,0)$ is a system free of forces, the time $\t$ is (except for the factor 2) the time of particles in $\Z_{n+1}$, which is also the time of particles (absolute time) in $Z_n$. The time $x^0/c$ comes from the ``clock'' that we put when replacing the configuration space $M$ with $\M=M\times \R$.

  The quantization in the system $(\M,\T_2,0)$ gives
  \begin{equation}\label{HJHertz9}
  \widehat p_0=-i\hbar\frac{\partial}{\partial x^0},\quad \widehat{\T}=-\frac 12\hbar^2\Ddelta,
  \end{equation}
  where $\Ddelta$ is the Laplacian operator associated with $\T_2$ in $\M$. We have
   \begin{equation}\label{HJHertz10}
  \Ddelta=g^{00}\widehat{p_0^2}+\Delta,
  \end{equation}
 where $\Delta$ is the Laplacian operator in $(M,T_2)$ lifted to $\M$ (the tensor $T_2$ lifted to $\M$ and quantized there).

 According to the formulae given in Section \ref{tres}, it holds
 \begin{equation}\label{HJHertz11}
  \widehat{p_0^2}=-\hbar^2\left(\frac{\partial ^2}{(\partial x^0)^2}-\Gamma_{00}^\mu\frac{\partial}{\partial x^\mu}\right)=
    -\hbar^2\left(\frac{\partial ^2}{(\partial x^0)^2}-\frac 12\textrm{grad}(g_{00})\right).
  \end{equation}

  In general, $ \widehat{p_0^2}$ does not commute with $\Ddelta$, which prevents us to separate variables (the $x^0$ from the $x^\mu$) in the Schrödinger equation for $(\T,\M)$.

  The classical magnitude $H=T+U$, lifted from $TM$ to $T\M$ coincides with the classical magnitude $\T$ in the hypersurface $p_0=P_0$. The quantization of $H$ in $M$, and $\M$, has the form
  \begin{equation}\label{HJHertz12}
  \widehat{H}=\widehat T+U=-\frac 12\hbar^2\Delta+U.
  \end{equation}
  The operators  $\widehat{p_0}$, $\widehat H$, commute in $\C^\infty(\M)$, and we can apply the method of separation of variables in order to  find the functions that simultaneously satisfy the corresponding wave equations in the classical state corresponding to the action $\SS$. In such an state of the system $(\M,\T_2,0)$ is $p_0=P_0$, $H=E$($=\T$), and the wave equations for $p_0$, $H$, are:
  \begin{equation}\label{HJHertz13}
  -i\hbar\frac{\partial \Psi}{\partial x^0}=P_0\Psi,\qquad \left(-\frac 12\hbar^2\Delta+U\right)\Psi=E\Psi
  \end{equation}

  The first equation gives
  $$\Psi=e^{i\frac{P_0}{\hbar}x^0}\Phi(x^1,\dots,x^n)$$
  and then, the second one, becomes
  $$\left(-\frac 12\hbar^2\Delta+U\right)\Phi=E\Phi.$$
  The  Schrödinger equation with time $x^0/c=t^0$
  \begin{equation}\label{HJHertz14}
  \widehat H\Psi=i\hbar\,\frac{\partial\Psi}{\partial t^0}
  \end{equation}
  holds if $P_0=-E/c$. The time $t^0$ in these equations is $\tau-2t$, $t$ being the absolute time and $\tau$ the waves time on the given solution of the Hamilton-Jacobi equation.
 \bigskip

  Due to its linear character, (\ref{HJHertz14}) is valid for any superposition of states which hold it. Finally, such equation can be interpreted, by using the Stone theorem \cite{Yosida}: $-(i/\hbar)\hat H$ generates a uniparametric group of unitary automorphisms of the Hilbert space. $t^0$  is the parameter of the group. However, $t^0$ \emph{is not the time-duration} of the Classical Mechanics. The interpretation of $t^0$ in classical terms holds at each  stationary state associated with a solution of the Hamilton-Jacobi equation where $t^0=\tau-2t$ ($\tau=S/E$). But we do not have an interpretation of the parameter $t^0$ for a superposition of such states.
  \bigskip

  \section{On the interpretation of $E=h\nu$}\label{frecuencia}

  When the Schrödinger equation is derived from a time constraint, the $t$ can be interpreted as the time duration that is imposed on the quantum system by a classical mechanical system into which is submersed. For example, in the non relativistic theory of radiation \cite{Neumann}.

 Nevertheless, when a quantum system is considered by itself, as it is done in Section \ref{SchrodingerHertz}, the $t^0$-parameter does not admit an interpretation in a classical sense (although it is interpretable inside each solution of the stationary Hamilton-Jacobi equation). Then, the issue of whether frequencies $\nu^0=E/h$ are interpretable in classical terms in a way concordant with the experience arises. It must happen that, for the systems whose classical solutions are periodic, $\nu^0$ coincides with the classical period.

 In the case of  particles that move under a central force, there are only two types of forces that, necessarily, produce closed periodic trajectories (below a certain energy level): the elastic force and the Kepler force (Bertrand theorem \cite{Tisserand}). 
 
 For the harmonic oscillator, the periods of $t$, $\tau$ and $t^0$ coincide.
 
 For the Kepler problem, the relationship between the period $T$ of a revolution and the energy $E$ is $T^2=-K^2E^{-3}$ ($K$ is a constant given in the considered system). The formula $\partial S/\partial E=T$ for the total action corresponding to a cycle gives $S=2K(-E)^{-1/2}$ (taking $S=0$ for $E=-\infty$) and, hence, $S/E=-2T$, so that $\tau-2t$ in a complete cycle is $-4T$. 
 
 In both cases, if $\nu$ is interpreted as the classical frequency (with respect to the absolute time), along each cycle the quantum wave function recovers its initial phase.



\begin{thebibliography}{10}

\frenchspacing


     \bibitem{RM}  \textsc{Alonso-Blanco, R.J. and Mu\~{n}oz-D\'{\i}az, J.}, Una nota sobre los fundamentos de la mecánica (spanish),
   In El legado matemático de Juan Bautista Sancho Guimerá (edited by A. Campillo and D. Hernández-Ruipérez), 111-138. Ediciones Universidad de Salamanca and Real Sociedad Matemática Española, 2015.

  \bibitem{Primary spectrum} \textsc{Alonso-Blanco, R.J. and Muñoz-Díaz, J.},  Primary spectrum of $\C^\infty(M)$ and jet theory. J. Lie Theory 27 (2017), no. 4, 915-941.


   \bibitem{Arnold} \textsc{Arnold, V. I.}, {\sl Matematicheskie metody klassicheskoi mekhaniki} (russian). Nauka, Moscow, 1974. English translation, {\sl Mathematical methods in classical mechanics}, Springer Verlag, New York, 1989.


\bibitem{de Broglie} \textsc{De Broglie, L.}, Recherches sur la théorie des quanta (Researches on the quantum theory), Thesis, Paris, 1924, Ann. de Physique (10) 3, 22 (1925).


\bibitem{Engel-Faber} \textsc{Engel, F., Faber K.}, {\sl Die Liesche Theorie der partiellen Differentialgleichungen erster Ordnung} (german). Leipzig und Berlin, Teubner, 1932.

\bibitem{Eisenhart}
 \textsc{Eisenhart,L. P.}, {\sl Riemannian geometry}. Princeton University Press, Princeton,  1966.

\bibitem{Hertz}
 \textsc{Hertz, H.}, {\sl Die Prinzipien der Mechanik in neuem Zusammenhange dargestellt} (german). Barth, Leipzig, 1894. English translation: {\sl The principles of mechanics presented in a new form},  Macmillan, London, 1899.


\bibitem{Jacobi} \textsc{Jacobi, C.G.J.}, {\sl Vorlesungen über Dynamik}, Gehalten an der Universität zu Königsberg im Wintersemester 1842-1843 und nach einem von C.W. Borchart ausgearbeiteten hefte (german). A. Clebsch ed., Berlin, 1866.	English translation, {\sl Jacobi's lectures on dynamics}., by Balagangadharan, K.; Banerjee, Biswarup, eds.,Texts and Readings in Mathematics 51, New Delhi: Hindustan Book Agency, 2009.


\bibitem{Lagrange} \textsc{Lagrange, J.L.}, {\sl Mecanique Analytique}. Desaint, Paris, 1788; Reimp. Ed. Jacques Gabay, 1989.


\bibitem{Lie1}
 \textsc{Lie, S.}, {\sl Theorie der Transformationsgruppen} (german). Written with the help of Friedrich Engel. Teubner, Leipzig, 1888.

\bibitem{Lie2}
\textsc{Lie, S.}, {\sl Geometrie der Berührungstransformationen} (german). Written with the help of Georg Scheffers. B. G. Teubner, Leipzig, 1896.

\bibitem{RLiouville}
\textsc{Liouville, R.}, (1892). Sur les équations de la dynamique. Comptes Rendus de l'Académie des
Sciences Paris, 114: 1171-1172.

\bibitem{Lutzen}
\textsc{Lützen, J.}
{\sl Mechanistic images in geometric form.
Heinrich Hertz's principles of mechanics}. Oxford University Press, Oxford, 2005.

 \bibitem{LychaginQ} \textsc{Lychagin, V.}, Quantum mechanics on manifolds.
Geometrical aspects of nonlinear differential equations.
Acta Appl. Math. 56 (1999), no. 2-3, 231-251.


\bibitem{Mach} \textsc{Mach, E.} {\sl Die Mechanik in ihrer Entwickelung historisch-kritisch dargestellt}.
 Leipzig, 1883.
  English translation,  {\sl The science of mechanics - A critical and historical account of its development}. 
 Chicago, 1893.


\bibitem{MecanicaMunoz} \textsc{Mu\~{n}oz D\'{\i}az, J.},  The structure of
time and inertial forces in Lagrangian mechanics, \textit{Contemporary
Mathematics}, vol. 549, 2011, pp. 65-94.

\bibitem{RelatividadMunoz} \textsc{Mu\~{n}oz D\'{\i}az, J.}, Relativistic
forces in Lagrangian mechanics, arXiv:1206.1372 [math-ph].

\bibitem{QuantizacionMunozAlonso} \textsc{Mu\~{n}oz D\'{\i}az, J., and Alonso-Blanco, R.J.}, Quantization of mechanical systems,  J. Phys. Commun.  2, 025007 (2018); doi.org/10.1088/2399-6528/aaa850.

\bibitem{QuantizacionTensor} \textsc{Mu\~{n}oz D\'{\i}az, J., and Alonso-Blanco, R.J.}, A note on the quantization of tensor fields and quantization of mechanical systems, arXiv:1801.05183 [math-ph].

\bibitem{Neumann} \textsc{von Neumann, J.} {\sl Mathematische Grundlagen der Quantenmechanik} (german). Springer, Berlin, 1932. English translation,
{\sl Mathematical foundations of quantum mechanics}. Princeton University Press, Princeton, NJ., 1955.




\bibitem{Penrose} \textsc{Penrose, R.}, {\sl The road to reality}. Jonatan Cape, London, 2004.

\bibitem{Saunders} \textsc{Saunders, D.J.}, The geometry of jet bundles. London Mathematical Society Lecture Note Series, 142. Cambridge University Press, Cambridge, 1989.


 \bibitem{SchrII} \textsc{Schrödinger, E.}, Quantisierung als Eigenwertproblem 2, Annalen der Physik (4) 79 (1926).


\bibitem{SommerfeldMecanica} \textsc{Sommerfeld, A.}, {\sl{Mechanics}}, {Lectures on theoretical physics}, Volume I. Academic Press, New York, 1964.



\bibitem{Tisserand} \textsc{Tisserand, F.}, {\sl Traité de mécanique céleste} (french). Gauthier-Villars, Paris, 1889.

\bibitem{Yosida} \textsc{Yosida, K.}, {\sl{Functional analysis}}, Springer-Verlag, 1968.



\end{thebibliography}
\end{document}